\newtheorem{thm}{Theorem}
\newtheorem*{rem}{Remark}
\newcommand{\bt}{\begin{itemize}}
\newcommand{\et}{\end{itemize}}
\newcommand{\ben}{\begin{enumerate}}
\newcommand{\een}{\end{enumerate}}
\newcommand{\beq}{\begin{equation}}
\newcommand{\eeq}{\end{equation}}
\newcommand{\beqn}{\begin{eqnarray}}
\newcommand{\eeqn}{\end{eqnarray}}
\newcommand{\bea}{\begin{eqnarray*}}
\newcommand{\eea}{\end{eqnarray*}}
\newcommand{\bs}{\boldsymbol}
\newcommand{\inprob}{\stackrel{\mathrm{p}}{\longrightarrow}}
\newcommand{\dlaw}{\stackrel{\mathrm{\mathcal{L}}}{\longrightarrow}}
\newcommand{\iid}{\stackrel{\mathrm{iid}}{\sim}}
\definecolor{darkblue}{rgb}{0.0,0.0,0.55}
\renewenvironment{itemize}{
\begin{list}{}{
\setlength{\leftmargin}{4em}
}}{
  \end{list} }
\begin{document}

\title{A Generalized Kruskal--Wallis Test Incorporating Group Uncertainty with Application to Genetic Association Studies}

\author
{Elif F. Acar$^{1,2}$ and
Lei Sun$^{3,1}$\\
\small{$^{1}$ Department of Statistics, University of Toronto}\\
\small{$^{2}$ Department of Mathematics and Statistics, McGill University (current affiliation)}\\
\small{$^{3}$ Dalla Lana School of Public Health, University of Toronto}
}
\date{}

\maketitle

\begin{abstract}
Motivated by genetic association studies of SNPs with genotype uncertainty, we propose a generalization of the Kruskal--Wallis test that incorporates group uncertainty when comparing $k$ samples. 
The extended test statistic is based on probability-weighted rank-sums and follows an asymptotic chi-square distribution with $k-1$ degrees of freedom under the null hypothesis. 
Simulation studies confirm the validity and robustness of the proposed test in finite samples. 
Application to a genome-wide association study of type 1 diabetic complications further demonstrates the utilities of this generalized Kruskal--Wallis test for studies with group uncertainty.
\end{abstract}

{\bf Keywords:} {\em Genome-wide association studies; Imputation; k-sample problem; Misclassification; Next-generation sequencing; Non-parametric test; Probabilistic data; Rank.}

\newpage
\section{Introduction}
The seminal work by Kruskal and Wallis (\citeyear{KW:1952}) provided us a robust rank-based test for the $k$-sample problem, complementing the parametric approaches such as the one-way analysis of variance (ANOVA). In the classical  $k$-sample problem, data are well classified into different categories or groups.  
However, in many current scientific studies the categorical variables are not necessarily deterministic, and the uncertainties are quantitatively expressed by probability distributions over attributes. Such classification problems often arise in biomedical and bioinformatics applications where data mining techniques and classification algorithms are used to obtain class membership probabilities.

A particular motivating example of this work is the genetic association study of single nucleotide polymorphisms (SNPs) for which the genotype group assignments are often known with ambiguity. The data uncertainties at these SNPs are typically represented by genotype probabilities obtained from various genotype calling algorithms \citep[e.g.][]{Carvalho:2010} or imputation algorithms \citep[e.g.][]{Mach:2009}. 
Table~\ref{tab1} provides a hypothetical illustration.
In such cases, a number of parametric remedies have been proposed, including the popular dosage approach, the weighted regression method \citep{ProbABEL} and likelihood-based score tests \citep{Schaid:2002}.
Although these parametric approaches are satisfactory in many applications, investigators often seek complimentary evidence provided by robust non-parametric  alternatives, safeguarding their statistical analyses against potential model misspecifications. Therefore, it is of both theoretical and practical importance to generalize the Kruskal--Wallis test  so that it is applicable to the $k$-sample problem but with group uncertainty.

\begin{table}[h!] \small
\begin{center} 
\caption{An illustration of SNP genotype probabilities \label{tab1}}
\begin{tabular}{c cc     cccccccc    c c } \label{imputed}
\\[0.5ex] 
\hline
&   Individual    && \multicolumn{6}{c} {Genotype}     & hard call &  soft call  
\\[0.5ex]
&                         &&0    &&     1       &&     2           &&  &
\\  \hline
&1   &&   0.925  &&   0.045   &&  0.030 & &    0 & 0.105
\\ [0.5ex]
&2   &&   0.156  &&   0.102   &&  0.742 & &    2 & 1.586
\\ [0.5ex]
& \vdots && \vdots && \vdots && \vdots  &
\\ [0.5ex]
& N  &&   0.375  &&   0.410   &&  0.215 &&  1 &  0.840
\\[0.5ex]
\hline
\end{tabular}
\end{center} 
\end{table}

To formulate the testing problem, let $Y$ be a continuous response variable and $G$ a categorical variable with $k$ distinct attributes. 
For instance, in genetic association studies, $Y$ denotes the phenotype of interest (e.g. blood pressure or glucose level) and $G$ is the genotype variable at a particular SNP with $k=3$. The three categories for a SNP represent if an individual's genotype at this SNP contain $0$, $1$ or $2$ copies of the minor allele (one of the two alleles with population frequency $<0.5$). The goal of the association analysis between $Y$ and $G$ is to determine if the phenotype $Y$ values differ between individuals with different genotype $G$ values.  
This can be achieved, for example, by regressing $Y$ on $G$ and other relevant covariates. Hence, the classic linear model 
\beq
\label{LM}
Y_j = \mu + \beta \;G_j + \varepsilon_j, \qquad \qquad j=1,2,\ldots, N,
\eeq
with $\varepsilon_j \iid N(0, \sigma^2)$, provides a basis for most association analysis or group comparisons. 
However, results from the robust Kruskal--Wallis test are often obtained as preliminary or complimentary evidence. 

In practice, what is available to us may not be the true $G$ group value, but rather probabilistic data of $G$, i.e. the vector of group probabilities $\bs{p}_{j} = (p_{1j}, p_{2j},\cdots, p_{kj} )$, where 
$
p_{ij} = P(G_j = i)
$ 
for $i= 1,2, \ldots, k$ and $j=1,2, \ldots, N$, with $\sum_{i=1}^k p_{ij}=1$. 
In genetic association studies of SNPs, depending on the experiment used by each application, the genotype of a SNP may be inferred via classification algorithms (e.g. Birdseed, \citep{Birdseed}), imputed via imputation algorithms, (e.g. TUNA \citep{Nicolae:2006}, MaCH \citep{Mach:2006} and Impute \citep{Marchini:2007}), or derived from next generation sequencing calling algorithms (e.g.  SYZYGY \citep{SYZYGY} and SNVer \citep{SNVer}).
In each case, an individual's genotype is most likely associated with some level of uncertainty and the probability that the true genotype belongs to each of the three genotype groups, $p_{ij} = P(G_j = i)$ for $i= 0, 1, 2$ is provided for individual $j$, as illustrated in Table~\ref{tab1}.

In the presence of genotype group uncertainty, the best-guess approach (also known as the hard-call approach) bypasses the problem of incomplete group information by using the most probable group, $\tilde{G}_j= \{ i :  p_{ij}= \max(p_{0j}, p_{1j}, p_{2j}) \}$, in place of $G_j$ in \eqref{LM}.
Depending on whether $\tilde{G}_j$ are considered ordinal or categorical, a $t$-test ($df=1$) or an ANOVA $F$-test ($df=2$) can be performed, so is the Kruskal--Wallis test ($df=2$). 
We refer to these tests as Best-Guess Linear Model (BG-LM), Best-Guess ANOVA (BG-ANOVA) and Best-Guess  Kruskal--Wallis  (BG-KW), respectively (Table~\ref{test_table}). 
Although convenient, this best-guess approach fails to fully utilize the information in the group probabilities.
 
An alternative method is the dosage approach (also known as the expectation-substitution or soft-call approach). 
In this approach, each $G_j$ in \eqref{LM} is substituted by its expectation $\bar{G}_j= p_{1j}+2 \times p_{2j}$. 
The association evidence is then assessed, for example, by a $t$-test ($df=1$) from the regression of $Y$ on $\bar{G}$.  
We refer to this test as the dosage test. 
The main disadvantage of this approach is that it does not allow $G$ to be of categorical nature, and it constrains the relationship between $Y$ and $G$ to an additive model. There are several other model-based methods that incorporate group probabilities \citep[e.g.][]{Marchini:2007, Lin:2008, Kutalik:2010, ProbABEL, Schaid:2002}, however, model-free counterparts such as the Kruskal--Wallis test has not been proposed.

The remaining paper is organized as follows.
In Section \ref{s:sec2}, we describe the construction of the generalized Kruskal--Wallis test statistic based on intuitive probability-weighted rank-sums, 
and provide theoretical justifications of its asymptotic chi-square distribution. 
We show that the original Kruskal--Wallis test is a special case of the proposed test and discuss how to handle tied observations and possible variations in probabilistic data. 
Section \ref{s:sec3} contains simulation studies to evaluate the finite sample performance of the proposed test at different levels of group uncertainty. 
Section \ref{s:sec4} applies the method to data from a genome-wide association study of complications in type 1 diabetic patients. 
Section \ref{s:sec5} concludes with additional discussions. 
Additional simulation results are provided in the Supplemental Material.

\section{The Generalized Kruskal--Wallis Test}
\label{s:sec2}
\subsection{Notation and the Original Kruskal--Wallis Test}

Consider a random sample of size $N$ from a large population consisting of $k \geq 2$ disjoint groups or categories, with population proportions $\pi_i $, $i=1,\ldots,k$, and $\sum_{i=1}^k \pi_i =1$. 
Denote by $G$ the categorical variable taking values on $\mathcal{G}=\{1,2,\ldots,k\}$ and suppose that each category is adequately represented in the sample.
Of interest is to compare the $k$ groups, of sizes $ n_1, \ldots, n_k$ with $\sum_{i=1}^k n_i = N$, based on a continuous response variable $Y$. Formally, letting the distribution function of $Y$ over the group $i$ be of the form $F_i(y)= F(y-\theta_i)$, we wish to test 
\beq 
\label{null}
H_0 : \theta_1 = \theta_2 = \cdots = \theta_k  \qquad \text{against} \qquad  H_A: \text{not all $\theta_i$'s are equal.} 
\eeq

When the precise category assignment of $G$ is available, the Kruskal--Wallis test for \eqref{null} is performed by ranking all the observations together and comparing the sum of the ranks for each group. 
Let $r_{j}$ be the rank of $Y_j$ in the overall sample and define the indicator variable 
$
Z_{ij} = \bs{1}(G_j= i)
$
for the group membership, the Kruskal--Wallis test statistic is 
\beq  
\label{H}
H =  \frac{12}{N(N+1) }    \sum_{i=1}^k \frac{R_i^2}{n_i}  - 3(N+1),    
\eeq
where $n_i = \sum_{j=1}^N \; Z_{ij}$ and 
$$R_i = \sum_{j=1}^N \; Z_{ij} r_j.$$

Under the null hypothesis of \eqref{null}, $H$ follows an asymptotic chi-square distribution with $k-1$ degrees of freedom \citep{K:1952,KW:1952}.

\subsection{The generalized  Kruskal--Wallis Test}
Suppose available to are not $G$ but probabilistic data of $G$, $\bs{p}_{j} = (p_{1j}, p_{2j},\cdots, p_{kj} )$, where 
$
p_{ij} = P(G_j = i)
$ 
for $i= 1,2, \ldots, k$ and $j=1,2, \ldots, N$, with $\sum_{i=1}^k p_{ij}=1$. 
In this case, it is intuitive to consider the weighted rank-sum 
\beq
\label{Rtilde}
R^{\ast}_i = \sum_{j=1}^N \; p_{ij} r_j
\eeq
for each group as a basis of comparison.

However, direct replacement of $R_i$ with $R^{\ast}_i$ in the original Kruskal--Wallis test statistic \eqref{H} does not lead to a properly calibrated test statistic. 
Below we describe the construction of the generalized  Kruskal--Wallis test, based on this appealing weighted rank-sum $R^{\ast}_i$,  that has an asymptotic chis-square distribution with $k-1$ degrees of freedom.

We make the following assumptions.
\bt
\item[(A1)] $\bs{p}_{j}$'s are independent of $Y_j$'s.

\item[(A2)]  $\bar{p}_{i}= \sum_{j=1}^N p_{ij} / N \;  \to \pi_i \; $,  as $\; N \to \infty$,  with $ 0< \pi_i<1$.

\item[(A3)]  $\sum_{j=1}^N (p_{ij} -\bar{p}_i )^2  / N  \; \to \nu_i \; > 0$, as $\; N \to \infty$.
\item[(A4)]  $$   \frac{  \sum_{j=1}^N   (p_{ij} -\bar{p}_i ) (p_{i'j} -\bar{p}_{i'} ) }  {\sqrt{  \sum_{j=1}^N (p_{ij} -\bar{p}_i )^2   \;  \sum_{j=1}^N (p_{i'j} -\bar{p}_{i'} )^2 }     }    \to \rho_{ii'} \; \;  \text{as}  \; N \to \infty.  \hspace{1.08in} \quad $$
\et\

The independence assumption (A1) is standard in statistical analyses of explanatory/response data and is reasonable in practice because, for instance, most imputation algorithms do not consider the phenotype data in the classification of the genotype variable \citep{Marchini:2010}.
The assumption (A2) ensures that $\sum_{j=1}^N p_{ij}$ provides a reasonable approximation for $n_i$ and that the relative group sizes are convergent. 
Assumptions (A3) and (A4) are required for the (joint) asymptotic normality of the linear rank statistic $R^{\ast}_i$.

If the groups are indeed from an identical population, the $r_j$'s can be viewed as a random sample drawn without replacement from the first $N$ integers. Thus, $\text{E}(r_j)= (N+1)/ 2 $, $\text{Var}(r_j)=(N^2-1)/ 12 $ and $\text{Cov}(r_j,r_{j'})= -(N+1) / 12 $, for $j\neq j'$ under the null hypothesis of \eqref{null}.
The conditional mean and the conditional variance of $R^{\ast}_i$ given the group probabilities $p_{ij}$ can then be derived as
$$
\text{E}(R^{\ast}_i)=\tilde{\mu}_i = \frac{N+1}{2} \sum_{j=1}^N p_{ij} \qquad \text{and} \qquad  \text{Var}(R^{\ast}_i)=\tilde{\sigma}^2_i =\frac{N(N+1)}{12} \sum_{j=1}^N (p_{ij} - \bar{p}_i)^2,    
$$
respectively.
An important distinction between our approach and that of \cite{KW:1952} is that we consider permutations of the first $N$ integers rather than finite-sampling, hence no finite sample correction is required in our derivations.

\subsubsection{The case with two samples}
The following result is due to the asymptotic theory of linear rank statistics \citep{WW:1944, Hajek:1999} and governs our test construction.
\begin{thm} \label{thm1}
Under the null hypothesis of \eqref{null} and assumptions (A1)-(A3), the limiting distribution of 
$$ L_N=  \frac{R^{\ast}_i  -  (N+1)/2 \;  \sum_{j=1}^N p_{ij}      }{\tilde{\sigma}_i} $$
is standard normal with mean $0$ and variance $1$.
\end{thm}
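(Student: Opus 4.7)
My plan is to recognize $R^{\ast}_i=\sum_{j=1}^N p_{ij}\,r_j$ as a linear rank statistic with the $p_{ij}$ playing the role of fixed scores, and then to invoke the classical central limit theorem for such statistics \citep{WW:1944,Hajek:1999}. By assumption (A1) the weight vectors $\bs{p}_j$ are independent of the responses $Y_j$, so I work conditionally on $\{\bs{p}_j\}_{j=1}^N$ and treat the $p_{ij}$ as deterministic constants. Under the null hypothesis the rank vector $(r_1,\ldots,r_N)$ is then a uniformly random permutation of $\{1,\ldots,N\}$, and the formulas for $\tilde\mu_i$ and $\tilde\sigma^{2}_i$ already displayed follow directly from the permutation moments $\text{E}(r_j)=(N+1)/2$, $\text{Var}(r_j)=(N^2-1)/12$, and $\text{Cov}(r_j,r_{j'})=-(N+1)/12$ for $j\ne j'$.

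The remaining task is to verify the regularity condition that drives the CLT for linear rank statistics, namely Noether's condition
\[
\frac{\max_{1\le j\le N}(p_{ij}-\bar p_i)^2}{\sum_{j=1}^N(p_{ij}-\bar p_i)^2}\;\longrightarrow\; 0 \qquad \text{as } N\to\infty.
\]
The numerator is bounded above by $1$, since $p_{ij}\in[0,1]$ forces $|p_{ij}-\bar p_i|\le 1$. By (A3) the denominator grows like $N\nu_i$ with $\nu_i>0$, so the ratio is $O(1/N)$ and Noether's condition is satisfied. The cited linear-rank CLT then yields $\mathcal{L}(L_N\mid \bs{p}_1,\ldots,\bs{p}_N)\to N(0,1)$.

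Because this conditional limit is the non-random distribution $N(0,1)$ irrespective of the realization of the probability vectors, bounded convergence of characteristic functions lifts the statement to unconditional convergence of $L_N$ in law. The main obstacle I anticipate is more bookkeeping than mathematics: being precise about the conditioning argument when the $\bs{p}_j$'s themselves are random outputs of a calling or imputation algorithm, and ensuring that (A3) is interpreted for the realized weights. Once that is settled, the heart of the proof is simply the verification of Noether's condition above, while the rest is a direct appeal to a well-established CLT; note that (A2) does not enter the one-dimensional argument but will be needed for the joint extension that underlies the full $k$-sample test.
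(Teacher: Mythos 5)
Your proof is correct, but it reaches the conclusion by a different sufficient condition than the paper does. The paper verifies the \emph{W condition} of the Wald--Wolfowitz theorem, namely that
$$
\frac{N^{-1}\sum_{j=1}^N (p_{ij}-\bar p_i)^r}{\{N^{-1}\sum_{j=1}^N (p_{ij}-\bar p_i)^2\}^{r/2}}=O(1),\qquad r=3,4,\ldots,
$$
which follows because $0\le p_{ij}\le 1$ bounds every central sample moment while (A3) keeps the denominator bounded away from zero; it then appeals to Theorem 6.1 of Fraser (1957). You instead verify \emph{Noether's condition}, $\max_j (p_{ij}-\bar p_i)^2\big/\sum_j (p_{ij}-\bar p_i)^2\to 0$, and invoke the H\'ajek-type CLT for simple linear rank statistics; your verification (numerator bounded by $1$, denominator of order $N\nu_i$ by (A3)) is sound. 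The two routes are equally legitimate here --- both references appear in the paper --- and your condition is the weaker one, so in principle your argument covers slightly more general score sequences, though for probabilities in $[0,1]$ with (A3) both conditions are trivially met. Your explicit handling of the conditioning on $\{\bs{p}_j\}$ and the lift from a conditional to an unconditional limit is a point the paper leaves implicit (it treats the $p_{ij}$ as fixed throughout and defers randomness of the group probabilities to a separate discussion), and your remark that (A2) is not needed for the one-sample limit is consistent with the paper, whose proof mentions $\bar p_i\to\pi_i$ but does not essentially use it.
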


\begin{proof}
It suffices to show that the sequence $( p_{i1},p_{i2}, \ldots, p_{iN})  $ satisfies the W condition of the Wald--Wolfowitz Theorem \citep{WW:1944}, i.e. 
$$
\frac{  N^{-1} \sum_{j=1}^N (p_{ij} - \bar{p}_i)^r } {   \left\{  N^{-1} \sum_{j=1}^N (p_{ij} - \bar{p}_i)^2  \right\}^{r/2}   } = O(1), \qquad r=3,4,\ldots. 
$$
Since $0 \leq p_{ij} \leq1$, with $  \bar{p}_i  \to \pi_i$, the central sample moments of the sequence $( p_{i1},p_{i2}, \ldots, p_{iN})  $ are finite: 
$$
\frac{1}{N} \sum_{j=1}^N \left(  p_{ij}- \bar{p}_i \right)^r = O(1), \quad  \text{for} \;  r= 3,4,\ldots.
$$
Assumption (A3) ensures a non-zero variance for $( p_{i1},p_{i2}, \ldots, p_{iN})$ and hence the W condition holds. 
For a detailed proof of the asymptotic normality of $L_N$ see, for example, Theorem 6.1 of \cite{DAS:1957}.
\end{proof}

The generalized Kruskal--Wallis test for the two-sample problem takes the form
\beqn \label{Hstar_k2}
H^\ast = \left(    R^{\ast}_i - \frac{N+1}{2} \;  \sum_{j=1}^N p_{ij}  \right)^2   \bigg/  \tilde{\sigma}^2_i,
\eeqn
where $i = 1$ or $2$. By Theorem \ref{thm1}, $H^\ast$ has an asymptotic $\chi^2 (1)$ distribution under the null hypothesis.
Note that it is sufficient to consider only one of the $R^\ast_i$'s in $H^\ast$ as 
$$
  \frac{ R^{\ast}_1 - (N+1)/2 \;  \sum_{j=1}^N p_{1j}}{ \tilde{\sigma}_1   } = -\;  \frac{ R^{\ast}_2 -(N+1)/2 \;  \sum_{j=1}^N p_{2j}}{  \tilde{\sigma}_2 },
$$
which can be easily verified using $p_{1j} = 1-p_{2j}$, for $j=1,\ldots,N$.

\subsubsection{The case with three samples}
For the three-sample problem, we shall consider the joint distribution of any two $R^{\ast}_i$ and $R^{\ast}_{i'}$.
The covariance between $R^{\ast}_i$ and $R^{\ast}_{i'}$ can be calculated as
\beqn
\nonumber
\text{Cov}(R^{\ast}_i, R^{\ast}_{i'}) 
&=& \frac{N(N+1)}{12} \sum_{j=1}^N   (p_{ij} -\bar{p}_i ) (p_{i'j} -\bar{p}_{i'} ), 
\eeqn
which yields the correlation 
 $$ \tilde{\rho}_{ii'}=  \frac{  \sum_{j=1}^N   (p_{ij} -\bar{p}_i ) (p_{i'j} -\bar{p}_{i'} ) }  {\sqrt{  \sum_{j=1}^N (p_{ij} -\bar{p}_i )^2   \;  \sum_{j=1}^N (p_{i'j} -\bar{p}_{i'} )^2 }     }\;. $$
\begin{thm} \label{thm2}
Under the null hypothesis of \eqref{null} and assumptions $(A1)-(A4)$, the limiting distribution of 
\beq
\nonumber
L_N=  \frac{R^{\ast}_i  -     (N+1)/2  \;  \sum_{j=1}^N p_{ij}}{\tilde{\sigma}_i}  \qquad \text{and} \qquad  L'_N= \frac{R^{\ast}_{i'}  -    (N+1)/2 \;    \sum_{j=1}^N p_{i'j}     }{\tilde{\sigma}_{i'}} 
\eeq
is bivariate normal with means $0$, variance $1$ and correlation $\rho_{ii'}$.
\end{thm}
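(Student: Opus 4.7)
The plan is to establish joint asymptotic normality of $(L_N, L'_N)$ via the Cramér--Wold device: it suffices to show that, for every fixed $(a,b)\in\mathbb{R}^2$, the linear combination $aL_N+bL'_N$ converges in distribution to a centred normal random variable with variance $a^2+b^2+2ab\,\rho_{ii'}$, which matches the variance of $a Z+bZ'$ for a centred bivariate normal $(Z,Z')$ with unit variances and correlation $\rho_{ii'}$.

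First I would rewrite $aL_N+bL'_N$ as a single linear rank statistic. Using the identity $R^{\ast}_i-(N+1)/2\sum_j p_{ij}=\sum_j(p_{ij}-\bar{p}_i)\,r_j$, we obtain
$$
aL_N+bL'_N \;=\; \sum_{j=1}^N c_{Nj}\, r_j, \qquad c_{Nj}=\frac{a(p_{ij}-\bar{p}_i)}{\tilde{\sigma}_i}+\frac{b(p_{i'j}-\bar{p}_{i'})}{\tilde{\sigma}_{i'}}.
$$
Since a common multiplicative scaling of the coefficients does not affect the W condition, it is equivalent to verify the condition for the unscaled sequence $\alpha(p_{ij}-\bar{p}_i)+\beta(p_{i'j}-\bar{p}_{i'})$ with suitable constants. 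Boundedness of $p_{ij},p_{i'j}\in[0,1]$ gives the numerator bound on the $r$-th central sample moments for all $r\ge 3$ exactly as in the proof of Theorem \ref{thm1}, while assumptions (A3) and (A4) yield the limit of the denominator
$$
\frac{1}{N}\sum_{j=1}^N\!\bigl[\alpha(p_{ij}-\bar{p}_i)+\beta(p_{i'j}-\bar{p}_{i'})\bigr]^2 \;\longrightarrow\; \alpha^2\nu_i+2\alpha\beta\rho_{ii'}\sqrt{\nu_i\nu_{i'}}+\beta^2\nu_{i'}.
$$
Hence, away from the degenerate cases (to which I return below), the W condition holds and the Wald--Wolfowitz theorem delivers asymptotic normality of $aL_N+bL'_N$.

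It then remains to identify the limiting mean and variance. The mean is zero by construction. The variance follows from the permutation moments $\text{Var}(r_j)=(N^2-1)/12$ and $\text{Cov}(r_j,r_{j'})=-(N+1)/12$ for $j\ne j'$, which give the stated expression for $\text{Cov}(R^{\ast}_i,R^{\ast}_{i'})$ and hence $\text{Var}(aL_N+bL'_N)=a^2+b^2+2ab\,\tilde{\rho}_{ii'}$. By (A4), $\tilde{\rho}_{ii'}\to\rho_{ii'}$, so the Cramér--Wold conclusion is exactly the claimed bivariate normal limit.

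The main obstacle is the degenerate case in which the quadratic form above vanishes for some non-trivial $(\alpha,\beta)$, equivalently $|\rho_{ii'}|=1$, so that the W condition breaks down for that particular linear combination. This is not a genuine problem: when $|\rho_{ii'}|=1$, the two standardized statistics are asymptotically deterministic affine functions of one another, so joint normality (in the degenerate sense) is immediate from Theorem \ref{thm1} applied to either marginal; otherwise, the limiting covariance matrix is strictly positive definite and the argument above goes through without modification.
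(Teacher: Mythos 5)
Your proposal is correct, but it takes a genuinely different route from the paper's. The paper's proof of Theorem~\ref{thm2} is essentially a citation: it notes that each marginal coefficient sequence $(p_{i1},\ldots,p_{iN})$ and $(p_{i'1},\ldots,p_{i'N})$ satisfies the W~condition (already checked in the proof of Theorem~\ref{thm1}) and then appeals to Theorem~6.3 of the cited reference, a ready-made bivariate extension of the Wald--Wolfowitz theorem. You instead prove the bivariate limit from scratch via the Cram\'er--Wold device, reducing to the \emph{univariate} Wald--Wolfowitz theorem applied to the single linear rank statistic $\sum_{j} c_{Nj} r_j$ whose coefficients are the corresponding linear combination of the two centred probability sequences; your identity $R^{\ast}_i-(N+1)/2\sum_j p_{ij}=\sum_j(p_{ij}-\bar{p}_i)r_j$, the scale-invariance of the W~condition, the variance computation $\text{Var}(aL_N+bL'_N)=a^2+b^2+2ab\,\tilde{\rho}_{ii'}$ from the permutation moments of the ranks, and the Chebyshev-type disposal of linear combinations with vanishing limiting variance are all sound. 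What your route buys is self-containedness and transparency: it makes explicit where (A4) is actually used (for convergence of the second sample moment of the combined coefficients, hence for the W~condition of the linear combination, as well as for identifying the limiting correlation), and it confronts the degenerate case $|\rho_{ii'}|=1$ that the paper's citation silently subsumes. What the paper's approach buys is brevity and a statement that transfers verbatim to the $(k-1)$-variate setting, which the paper later handles by the very same Cram\'er--Wold idea you employ here.
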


\begin{proof}
Since the sequences $( p_{i1},p_{i2}, \ldots, p_{iN})  $ and $( p_{i'1},p_{i'2}, \ldots, p_{i'N})$ satisfy the W condition, the result directly follows from Theorem 6.3 of \cite{DAS:1957}.
\end{proof}
Similar to \cite{KW:1952}, the generalized Kruskal--Wallis test can be formed by considering the exponent of the bivariate normal distribution of Theorem \ref{thm2}, multiplied by -2, 
\beqn \nonumber
&& H^\ast  =  \frac{1}{1- \tilde{\rho}_{ii'}^{ 2} } \Bigg[   \left(  \frac{R^{\ast}_i  -   (N+1)/2 \;  \sum_{j=1}^N p_{ij}    }{\tilde{\sigma}_i}   \right)^2  +    \left(  \frac{R^{\ast}_{i'}  -  (N+1)/2 \;  \sum_{j=1}^N p_{i'j}     } {\tilde{\sigma}_{i'} }   \right)^2  
\\
\label{Hstar}
&& \qquad \quad   -2 \tilde{\rho}_{ii'}   \left(  \frac{R^{\ast}_i  - (N+1)/2  \; \sum_{j=1}^N p_{ij}    }{\tilde{\sigma}_i}   \right) \left(  \frac{R^{\ast}_{i'}  - (N+1)/2 \;  \sum_{j=1}^N p_{i'j}    }{\tilde{\sigma}_{i'}}   \right)     \Bigg],
 \eeqn
which is asymptotically distributed as $\chi^2 (2)$ under the null hypothesis.

An algebraic simplification of $H^\ast$ is difficult because of the $p_{ij}$'s. However, as in the two-sample case, the left-out $R^{\ast}_\ell$, $\ell \neq i,i'$ would not be informative once $R^{\ast}_i$ and $R^{\ast}_{i'}$ are taken into account.

\subsubsection{The case with more than three samples}
Theorem \ref{thm2} states that the joint distribution of any two linear rank statistic is asymptotically bivariate normal. This result together with \emph{the Cramer-Wold device} yields the joint asymptotic multivariate normality of any $k -1$ linear rank statistic.
Thus, the case with $k>3$ is handled by considering the correlation matrix for an arbitrary $k-1$ of the $R_i$'s in a similar fashion. The test statistic is formed by the exponent of the $(k-1)$-variate normal distribution after multiplication by $-2$ and follows an asymptotic $\chi^2 (k-1)$ distribution under the null hypothesis. 
An R-code that implements the generalized Kruskal--Wallis test for any $k \geq 2$ is available at author's website.

\subsection{Further Considerations}

\subsubsection{$H$-test as a special case}
The generalized Kruskal--Wallis test reduces to the original Kruskal--Wallis test when $G_j$ is known for each subject. In this case, we define $p_{ij}= (Z_{1j},Z_{2j}, \ldots, Z_{kj})$. Then, $R^{\ast}_i = R_i$ and it is easy to show that 

\beqn
\nonumber
&& \tilde{\mu}_i  =  \frac{N+1}{2} n_i \; =\; \text{E}(R_i),  \qquad   \quad   
 \tilde{\sigma}^2_i =\frac{n}{12} (N+1)(N-n)   \; =\;   \text{Var}(R_i), 
\\ \nonumber
\text{and that} \hfill
\\ \nonumber
&&  \tilde{\rho}_{ii'} = - \sqrt{ \left( \frac{n_i}{N-n_i}\right)   \left( \frac{n_{i'}}{N-n_{i'}} \right) }  \; = \; \text{Cor}(R_i,R_{i'}), \quad \text{for}\;  i\neq i'.
\eeqn
Hence, $H$-test is a special case of $H^\ast$.

\begin{rem}
A major advantage of the generalized Kruskal--Wallis test is that it does not require a separate treatment for partially uncertain categorial data. 
In line with the above approach, when $G_j$ is available, the ranks are allocated to the corresponding groups with probability $1$, i.e. $p_{ij}= (Z_{1j},Z_{2j}, \ldots, Z_{kj})$ and when there is uncertainty in the group membership, the probability-weighing principle in \eqref{Rtilde} guard the testing procedure against possible misclassifications.
\end{rem}

\subsubsection {Correction for ties}
The generalized Kruskal--Wallis test can be corrected for ties in the same way as the original Kruskal--Wallis test. 
When ties occur in the data, one typically assigns the mean of the tied ranks to each member of the tied group. This specification, while not affecting the mean rank, reduces the population variance by $\sum T/(12N)$ and increases the population covariance by $\sum  T/\{12N(N-1)\}$,  where $T= (t-1)t(t+1)$ for each group of ties, $t$ denotes the number of ties in the group, and the summation is taken over all groups. Thus, in the case of ties, the variance of $R^{\ast}_i$ decreases by
$$ \frac{\sum T}{12(N-1)}    \sum_{j=1}^N (p_{ij}- \bar{p}_i)^2      .$$
Similarly, for $i \neq i'$, the covariance between $R^{\ast}_i$ and $R^{\ast}_{i'}$ is reduced by
$$ \frac{\sum  T}{12(N-1)}   \sum_{j=1}^N (p_{ij}- \bar{p}_i)(p_{i'j}- \bar{p}_{i'}).$$

The corrected generalized Kruskal--Wallis test for ties is hence obtained by substituting the adjusted $\tilde{\sigma}_i$ and $\tilde{\rho}_{ii'}$ in $H^\ast$. 
Note that, in many situations the difference between the generalized Kruskal--Wallis test  and its tie-corrected version would be negligible barring an excessive number of ties.
The R-code provided at author's website accommodates tie-correction in the generalized Kruskal--Wallis test.

\subsubsection{Exact distribution}
The exact null distribution of the Kruskal--Wallis test for three samples, each with up to five observations, is given in \cite{KW:1952}. 
More extensive tables are later provided by \cite{Iman:1975} for up to eight observations in each sample. 
With a moderately large number of observations, the exact probability calculations of the $H$ statistic become cumbersome, even with modern computing power.
In the case of the generalized Kruskal--Wallis test, similar tabulations, even for small samples, seem intractable due to probability weights and remain an open problem. 
Similar to the suggestion of \cite{KW:1952}, we recommend using the chi-square approximation when the $\sum_{j=1}^N p_{ij}$'s are at least five.

\subsubsection{Variation in group probabilities}
The proposed generalized Kruskal--Wallis test is conditional on observed data. 
Therefore, in our proposal, we treat the $\bs{p}_j$'s as fixed quantities that define the underlying probability distribution of the unknown $G_j$'s. 
The treatment is reasonable for genetic applications as there is little variation in the genotype probabilities if the same algorithm is employed more than once provided the input data are the same \citep[e.g.][]{Pei:2008}. 

In an unconditional inference, the variation in the group probabilities due to random sampling needs to be taken into account. 
This amounts to specifying a probability distribution for the $p_{ij}$'s, which may not be feasible in practice.
Here, we briefly discuss the validity of the proposed test when the group probabilities are random but treated as fixed, focusing on the case $k=2$. 
For the distribution of the $p_{ij}$'s, while one may consider, for instance, a mixture of two beta distributions, we prefer to keep our argument as general as possible.

Suppose the vector of probabilities $\{p_{i1}, \ldots, p_{iN} \} $ are drawn independently from a probability distribution with mean and variance satisfying conditions analogous to (A1)-(A3).
Let $\mu^\ast_{i}$ and $\sigma^{\ast2}_{i}$ denote the unconditional mean and variance of $R^{\ast}_i$, respectively.
It is easy to see that 
$$\mu^\ast_{i} = \frac{N+1}{2} \sum_{j=1}^N \text{E}( p_{ij}) = \text{E}(\tilde{\mu}_i), $$ 
and, by variance decomposition, we obtain 
\beqn
\nonumber
\sigma^{\ast 2}_{i}  
&=&     \frac{N(N+1)}{12} \sum_{j=1}^N  E\{(p_{ij}- \bar{p}_i)^2 \} + \left( \frac{N+1}{2}\right)^2 \sum_{j=1}^N \text{Var} (p_{ij})        \\
\nonumber
&=&     \text{E}(\tilde{\sigma}^2_i) + \text{Var} (\tilde{\mu}_i),
\eeqn
where $\tilde{\mu}_i$ and $\tilde{\sigma}^2_i$ remain the conditional mean and the conditional variance of $R^{\ast}_i$.

The statistic in $H^\ast$, normalized according to the conditional quantities, can be written as
\beq 
\label{random}
\frac{R^{\ast}_i  - \tilde{\mu}_i}  { \tilde{\sigma}_i}  \; = \;   \frac{   \sigma^{\ast }_i }{ \tilde{\sigma}_i}                           \left(   \frac{R^{\ast}_i -   \mu^\ast_{i}  }   {\sigma^{\ast }_i }      -  
 \frac{  \tilde{\mu}_i  - \mu^\ast_{i}  }  {\sigma^{\ast }_i }      \right).
\eeq
The first quantity in the parenthesis can be shown to be asymptotically standard normal under certain conditions similar to (A1)-(A3), and the second quantity has an approximate normal distribution with mean zero and variance $  \text{Var}( \tilde{\mu}_i) /  \sigma^{\ast2}_{i}, $
by the central limit theorem. Note that
$$
\text{Cov}(R^{\ast}_i, \tilde{\mu}_i) = \left( \frac{N+1}{2}\right)^2 \sum_{j=1}^N \text{Var} (p_{ij})=  \text{Var}( \tilde{\mu}_i)
$$
and hence
$$ \frac{R^{\ast}_i -    \tilde{\mu}_i   }   {\sigma^{\ast }_i }       \dlaw N\left(0, \;  \frac{  \text{E}(\tilde{\sigma}^2_i)  } {  \sigma^{\ast2}_i } \right).
 $$
Using Slutsky's theorem, we can obtain $\left( \tilde{R_i}  - \tilde{\mu}_i\right) /  \tilde{\sigma}_i  \dlaw N(0, 1)$, provided that  $  \tilde{\sigma}^2_i  \inprob  \text{E}(\tilde{\sigma}^2_i) $. The latter condition is satisfied when $\text{Var}\{ (p_{ij}- \pi_i)^2\}= \tau^2 >0$.

Thus, under certain assumptions, the generalized Kruskal--Wallis test statistic remains valid when group probabilities are random.
The application in Section \ref{s:sec4} provides further evidence of this conclusion.

\section{Simulations}
\label{s:sec3}
Here we evaluate the methods via simulation studies. Specifically, 
(i) we evaluate the validity of the asymptotic null distribution of the generalized Kruskal--Wallis test in finite samples, and
(ii) we compare the finite sample performance of the generalized Kruskal--Wallis test with those of commonly used parametric tests as summarized in Table~\ref{test_table}.
We focus on the dosage approach as it is the most popular method used in practice, and previous work that compared various parametric approaches also recommended its usage \citep{YunLi:2011}.  

\begin{table*}[h!] \footnotesize
\centering
\caption{Summary of the association tests compared} \label{test_table}
\begin{tabular}{ l l    c c c  } 
\\ 
\toprule
Test && df & \multirow{2}{*}{}  Incorporate &   Robust to    \\
        &&& Group Uncertainty & Model Assumptions         \\
\midrule \\[-1.5ex]   
\multicolumn{5}{c}{\underline{I: $(p_{0j}, p_{1j}, p_{2j})$ is used to determine the most likely genotype group}}  \\[1ex]     
{\bf BG-LM   }              & Best-Guess Linear Model            & 1           & No                                      & No     \\  [0.5ex]   
{\bf BG-ANOVA }       & Best-Guess  ANOVA                   &  2          & No                                      & No       \\  [0.5ex]   
{\bf BG-KW  }              & Best-Guess  Kruskal--Wallis      &  2           & No                                       & Yes   \\ [1.5ex]     
 \multicolumn{5}{c}{\underline{II:  $(p_{0j}, p_{1j}, p_{2j})$ is used directly in the test}} \\  [1ex]      
{\bf Dosage }              & Dosage                                          & 1           & Yes                                      & No    \\  [0.5ex]   
{\bf GKW }                   & Generalized Kruskal--Wallis      &  2           & Yes                                       & Yes    \\ 
\bottomrule
\end{tabular}
\end{table*}

\subsection{Simulation Methods} 
As a proof of principle and being consistent with the motivation of this work, we simulated genetic association data. 
Phenotype and SNP genotype data for $n=1,000$ individuals were generated as follows. 

The three genotype groups were coded as $G= 0$, $1$ and $2$ copies of the minor allele of a SNP. The SNP of interest had a minor allele frequency of $20\%$, leading to expected group size of  $(n_0, n_1, n_2)= (640, 320, 40)$ based on the multinomial distribution with parameters $(0.64, 0.32, 0.04)$ under the Hardy--Weinberg equilibrium assumption. We also considered minor allele frequency of 10\% and other values. 

The phenotype data were generated from an additive normal model, favorable to the parametric methods.
Without loss of generality, $Y$ values were simulated from normal distribution with equal mean of $ (2,2,2)$ under the null model, and $(1.75, 2, 2.25)$ under the alternative model for the three genotype groups $G= 0$, $1$ and $2$, respectively, with a common variance, $\sigma^2=1$. The mean values were chosen such that power (at the $\alpha=0.01$ level) to detect association between $Y$ and $G$ is about $95\%$ for minor allele frequency of $20\%$ (and about $70\%$ for minor allele frequency of $10\%$) with the given sample size and without genotype uncertainty.

Other simulating parameter values were also considered with varying minor allele frequencies, type 1 error rates and sample sizes, as well as non-normal or non-additive models (Table~\ref{simsum}). Additional results are provided in the Supplemental Material and conclusions are characteristically similar to the ones reported here. 

Given a true genotype $G$, to simulate the probabilistic genotype data, we used the Dirichlet distribution with scale parameters $a$ for the correct genotype category and $(1-a)/2$ for the other two, where $a=1$, $0.9$, $0.8$, and $0.7$, corresponding to an increasing level of group uncertainty ranging from $0\%$ to $30\%$. Under this Dirichlet simulating model, the proportion of the individuals whose most probable (best-guessed) genotypes are the correct ones is approximately $a$ (Table~\ref{coverage}).

\begin{table*}[h!] \footnotesize
\centering
\caption{The empirical proportion of the individuals whose most probable (best-guessed) genotypes are the correct ones. $a$ is the Dirichlet parameter for the correct genotype category. SNP has a minor allele frequency of 20\%. Results for other frequencies are similar. } \label{coverage}
\begin{tabular}{c  c  c  c  c c  c  c  c} 
\\ 
\toprule
$a$        &&      average       &&        $G = 0 $       &&        $G = 1 $    & &       $G = 2 $ \\
\midrule
1    && 1.00 && 1.00 && 1.00 && 1.00 \\ 
0.9 && 0.93 && 0.92 && 0.96 && 0.91\\
0.8 && 0.83 && 0.82 && 0.85 && 0.86 \\ 
0.7 && 0.74 && 0.74 && 0.74 && 0.80 \\ 
\bottomrule
\end{tabular}
\end{table*}

For each set of probabilistic data, $M=10,000$ experiments under the null model and $5,000$ experiments under the alternative model were conducted by simulating only the response data.  
Application in Section \ref{s:sec4} confirms that methods comparison is not affected by how the probabilistic data were generated.

\subsection{Evaluation of Accuracy}
Table~\ref{alpha} provides the empirical type 1 error rates of the five tests considered. 
The group uncertainty does not seem to alter the accuracy of any of the tests in an obvious way. 
For the proposed GKW test, we also compare the quantiles of the empirical distributions with those of the $\chi^2(2)$ distribution. Figures~\ref{QQplots020} and \ref{QQplots010} indicate that the empirical null distribution coincides with the asymptotic one, which is further supported by the Kolmogorov--Smirnov test with $p$-values $0.279$, $0.595$, $0.628$ and $0.599$ for SNP with minor allele frequency of $20\%$ and $0.174$, $0.377$, $0.375$ and $0.194$ for SNP with minor allele frequency of $10\%$, from the lowest to the highest uncertainty levels.

\begin{table*}[h!] \scriptsize
\centering 
\caption{Empirical type 1 error rates of the five tests at $\alpha=0.01$ under a normal null model, for testing the association of a SNP that has minor allele frequency of $20\%$ or $10\%$. $a$ is the parameter value of  the Dirichlet distribution used to simulate genotype probabilities.} \label{alpha}
\begin{tabular}{l l  c  c c c c  c    ccccc } \\
\toprule
 &&      \multicolumn{4}{c}{\textbf{  $\bs{\text{minor allele frequency}= 0.2}$}}   && \multicolumn{4}{c}{\textbf{  $\bs{\text{minor allele frequency}= 0.1}$}}  \\[0.8ex]
\cmidrule(r){3-6}  \cmidrule(r){8-11} 
uncertainty level  &&       0\%   &     10 \%   &     20 \%   &     30 \%   &&   0\%   &     10 \%   &     20 \%   &     30 \%   \\[0.4ex]
 &&     ($a=1$)   &     ($a=0.9$)   &    ($a=0.8$)   &     ($a=0.7$)    &&     ($a=1$)   &     ($a=0.9$)   &    ($a=0.8$)   &     ($a=0.7$)     \\[0.3ex]
\midrule
BG-LM                  &&    0.0109 & 0.0108 & 0.0090 & 0.0096   &&    0.0093 & 0.0091 & 0.0095 & 0.0105  \\
BG-ANOVA         &&    0.0098 & 0.0106 & 0.0109 & 0.0094   &&    0.0083 & 0.0089 & 0.0103 & 0.0102\\
BG-KW                &&     0.0094 & 0.0091 & 0.0100 & 0.0097   &&     0.0075 & 0.0096 & 0.0098 & 0.0094  \\
Dosage               &&    0.0109 & 0.0105 & 0.0091 & 0.0099    &&    0.0093 & 0.0092 & 0.0095 & 0.0095 \\ 
GKW                    &&     0.0094 & 0.0087 & 0.0091 & 0.0088   &&     0.0075 & 0.0088 & 0.0092 & 0.0095\\
\bottomrule
\end{tabular}
\end{table*}

\subsection{Evaluation of Efficiency}

We examine the empirical relative efficiency of the other tests compared to the proposed generalized Kruskal--Wallis test under the alternative model, using the normal additive data favorable to the model-based tests (Figure \ref{additive}). 
The empirical power of each test was obtained using the corresponding empirical type 1 error threshold reported in Table~\ref{alpha}.
Note that, when the true genotypes are used (i.e. $a=1$ with 0\% group uncertainty), the GKW and the BG-KW are equivalent. This is also true for the dosage test and the BG-LM.

As expected, the dosage has the best power when there is no group uncertainty (Figure~\ref{additive} at $0\%$ uncertainty level), because the data were generated under the best scenario for the dosage test (i.e. phenotype $Y$ was normally distributed with population means, $(1.75, 2, 2.25)$, increasing in an additive manner with respect to the number of copies of the minor allele, $G=(0, 1, 2)$). When the minor allele frequency is $20\%$, the power of the dosage test remains (slightly) higher than the generalized Kruskal-Wallis test even as the uncertainty level increases (Figure~\ref{additive020}). However, this is no longer true for detecting SNPs with minor allele frequency of $10\%$ (Figure~\ref{additive010}). For example, with uncertainty level at $30\%$ ($a=0.7$), the relative efficiency of all other tests including the dosage test is about $60\%$ as compared to the generalized Kruskal-Wallis test.

Moreover, in less favorable models, (e.g.\ Figure~\ref{WebFigure7} for non-normal model), the generalized Kruskal-Wallis test can outperform the others even when there is no genotype uncertainty. Additional simulation results with different minor allele frequencies (e.g.\ 0.05 and 0.3), different type 1 error rate (e.g.\ 0.05 and 0.001) and different model assumptions (e.g.\ non-additive model) as presented in Figures~\ref{WebFigure4}-\ref{WebFigure8} all confirm the robustness of the generalized Kruskal-Wallis test. Under scenarios favorable to model-based methods, the generalized Kruskal-Wallis test provides comparable power; with model misspecification or increased genotype uncertainty, the generalized Kruskal-Wallis test can outperform the others.

\begin{figure}[h!]
\centering
\subfigure[] 
{\label{additive020}
   \includegraphics[width=0.45\textwidth]{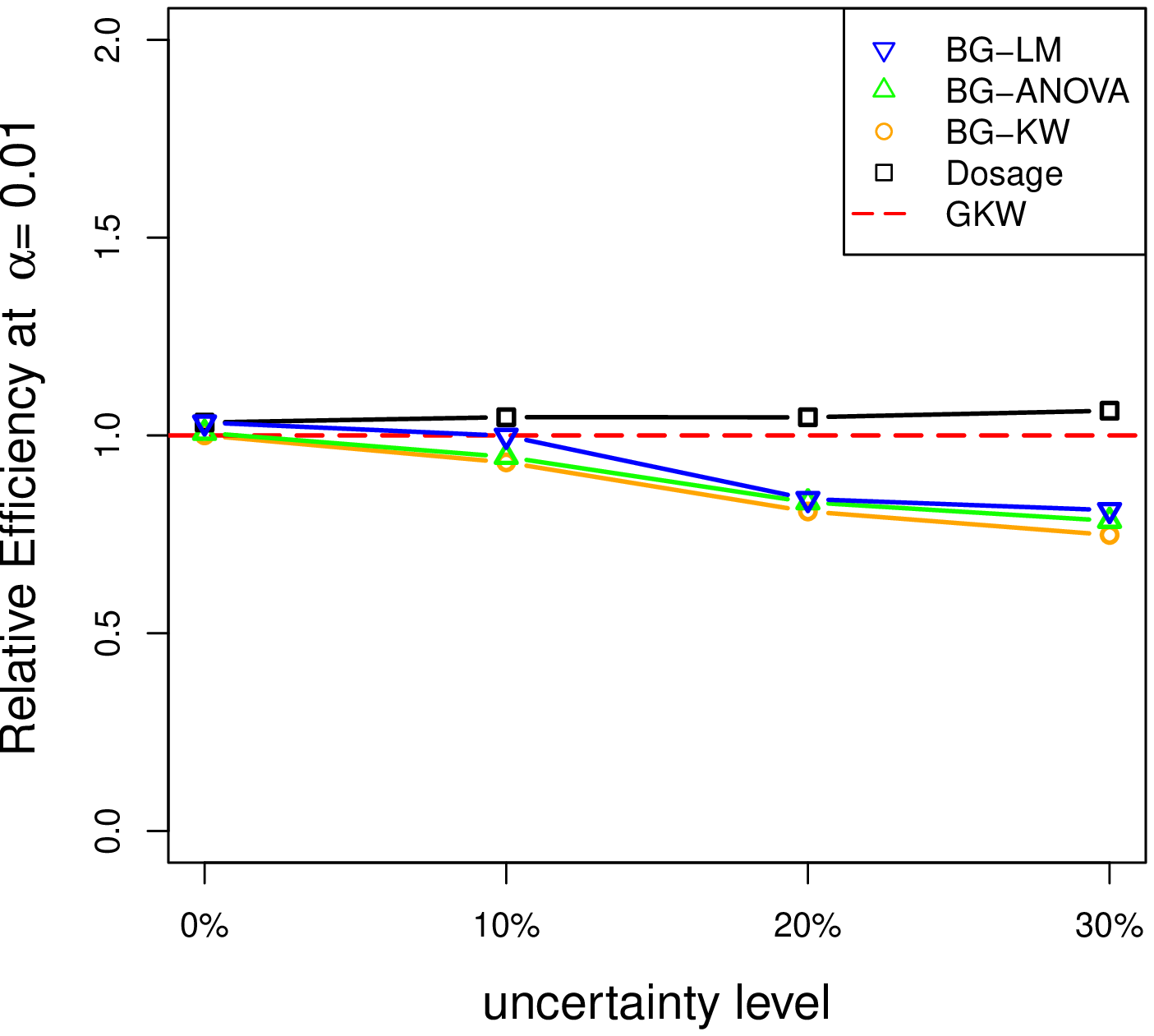}

}
\hspace{0.5cm}
\subfigure[]
{ \label{additive010}
 \includegraphics[width=0.45\textwidth]{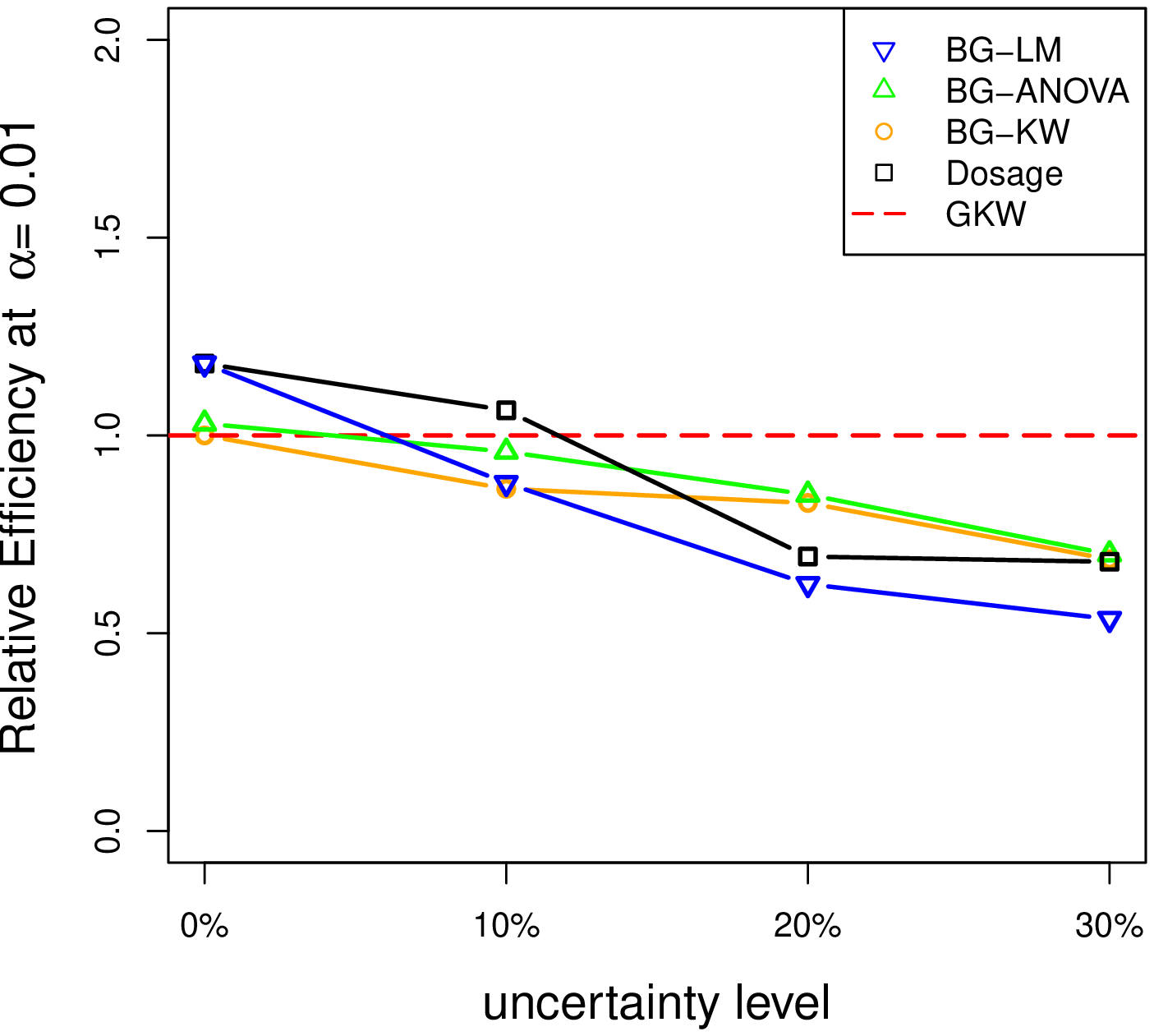}
}
\caption{Relative efficiency of other tests as compared to the GKW test, under a normal additive model, for testing the association of a SNP that has (a) minor allele frequency of $20\%$, or (b) minor allele frequency of $10\%$.    \label{additive}} 
\end{figure}

\section{Data Applications}
\label{s:sec4}
We demonstrate the proposed method in three applications using data from the genome-wide association study of complications in type 1 diabetes patients, for which over 800K SNPs were genotyped by the Illumina 1M beadchip assay and over 1.5 million ungenotyped SNPs were imputed \citep{Paterson:2010}.

The study sample consists of $n=1,300$ subjects with type 1 diabetes ($664$ treated conventionally and $636$ treated intensively) from the Diabetes Control and Complications Trial (DCCT). 
The phenotypes of interest are glycosylated hemoglobin (HbA1c) and diastolic blood pressure (DBP), collected quarterly from each patient over the course of the DCCT. 

The first application illustrates the case of no association using $27,265$ ungenotyped but imputed SNPs on chromosome 22.
The other two applications evaluate the performance of the generalized Kruskal--Wallis test in detecting putative associations of two genotyped SNPs with DBP and HbA1c.

\subsection{$P$-value Distribution of $H^\ast$ on Chromosome 22}

We investigated the $p$-value distribution of the generalized Kruskal--Wallis test using chromosome 22 data under the null hypothesis of no association.
The genotype data at $33,815$ SNPs were imputed (provided by Dr. Andrew Paterson's research group) using MaCH \citep{Mach:2006,Mach:2009} using HapMap II phased data as the reference panel. 
In the association analysis, we considered $27,265$ SNPs that yielded the sum of genotype probabilities of at least five for each of the three genotype groups. 

The phenotype data, mean DBP measurements over the first six study periods, were first permuted to eliminate any possible associations. 
We observed that the null distribution of $p$-values obtained from the $\chi^2(2)$ approximation of the generalized Kruskal--Wallis test statistic closely matches the theoretical uniform distribution.  
Figure~\ref{chr22} displays the corresponding quantile-quantile plot on the $-\log_{10} p$ scale.
The Kolmogorov--Smirnov test for uniformity yields $p$-value$=0.084$, providing additional evidence for the validity of the proposed generalized Kruskal--Wallis test in finite samples, because 
in contrast to the simulations, the group probabilities are random in that their distributions vary between imputed SNPs  and among the patient subjects.

\begin{figure}[h!]
\centering
 \includegraphics[width=0.55\textwidth]{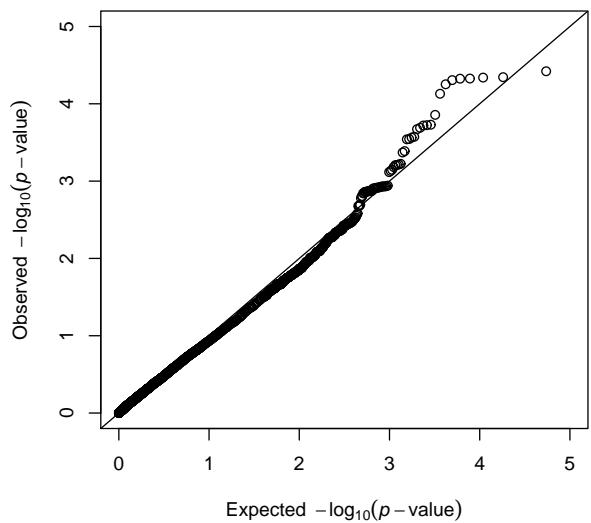}
\caption{The quantile-quantile plot of the  $-\log_{10}(p\text{-values})$ of the GKW test applied to assess association between 27,265 imputed SNPs on chromosome 22 and permuted diastolic blood pressure phenotype measures in 1,300 subjects with type 1 diabetes. \label{chr22}}
\end{figure}

\subsection{DBP with rs7842868 }
Recently, \cite{Ye:2010} identified rs7842868 on chromosome 8 as a SNP associated with DBP with $p\text{-value}\approx 4.5\times10^{-8}$. 
Here, we examined the performance of the five tests in detecting this association.    
Since GKW and BG-KW are readily advantaged in the case of non-normal data, we base our comparisons on the natural logarithm of the DBP measurements averaged over the first six study periods. The histogram of the phenotype data for the $1,300$ patients is given in Figure~\ref{Hist1}.

The observed genotype data at rs7842868 yielded the group sizes $(n_0,n_1,n_2)= (788,446,66)$. 
When tested for association, rs7842868 was found to be significant by all tests with similar results (Figure~\ref{dbp} at 0\% uncertainty level). 
GKW (equivalent to BG-KW when genotypes are known) had a marginally higher statistical significance with $p\text{-value} = 1.36 \times10^{-8}$, followed by the dosage test and BG-LM with $p\text{-value}=2.66\times10^{-8}$.

We then masked the actual data and simulated $1,000$ replicates of {\it in silico} genotypes (i.e. genotype probabilities) from the Dirichlet distribution at each group uncertainty level, as described in Section~\ref{s:sec3}. The results averaged over the $1,000$ replications are shown in Figure~\ref{dbp} and verify the robustness of the GKW test. 
For instance, at $10 \%$ or higher uncertainty level, the proposed GKW test had noticeable better performance than all other procedures.
Nevertheless, power to detect association by any method deteriorated considerably as genotype uncertainty increased.

\begin{figure}[h!]
\centering
\subfigure[]
{
    \label{dbp}
   \includegraphics[width=0.45\textwidth]{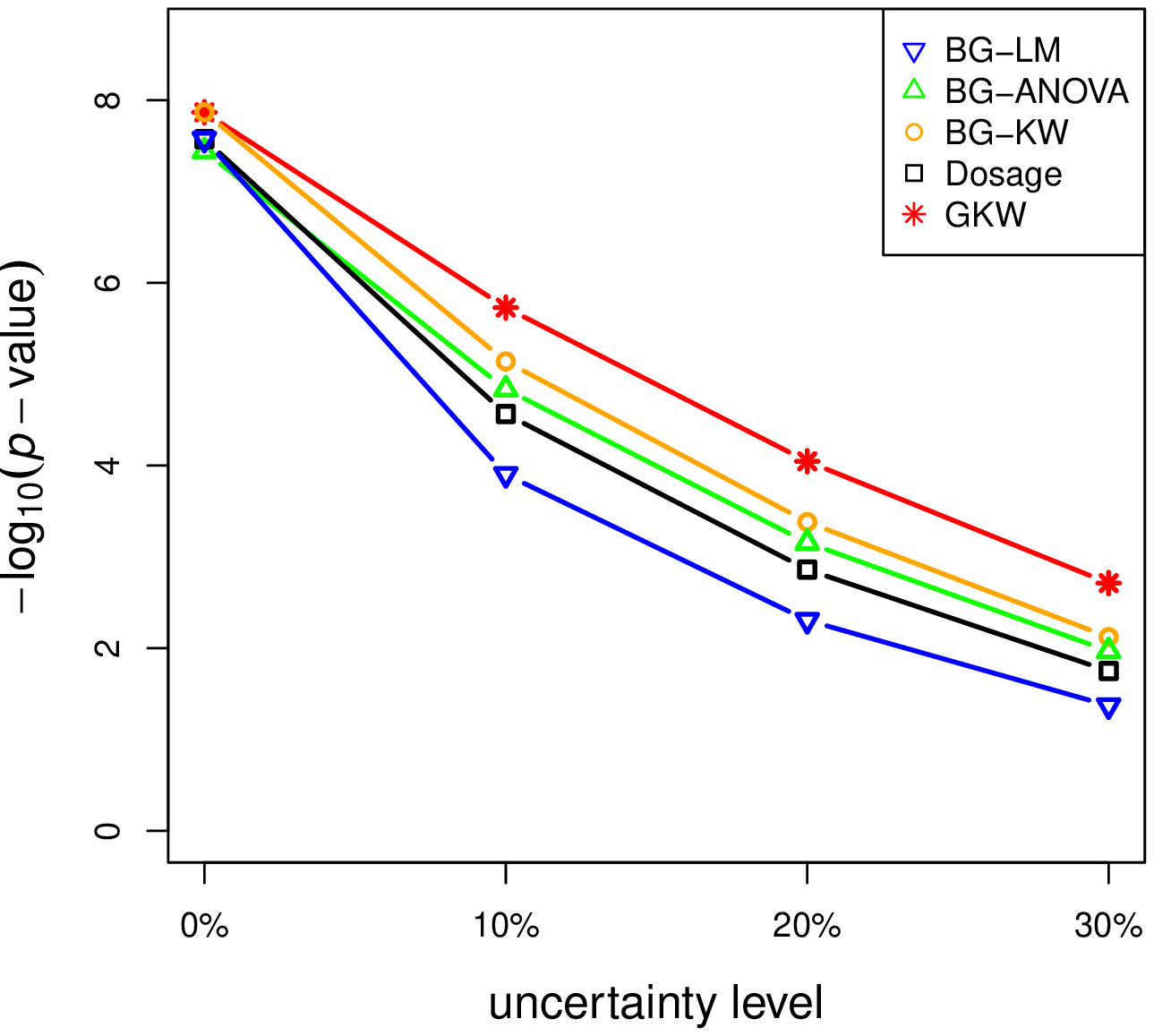}

}
\hspace{0.5cm}
\subfigure[]
{
    \label{hba1c}
   \includegraphics[width=0.45\textwidth]{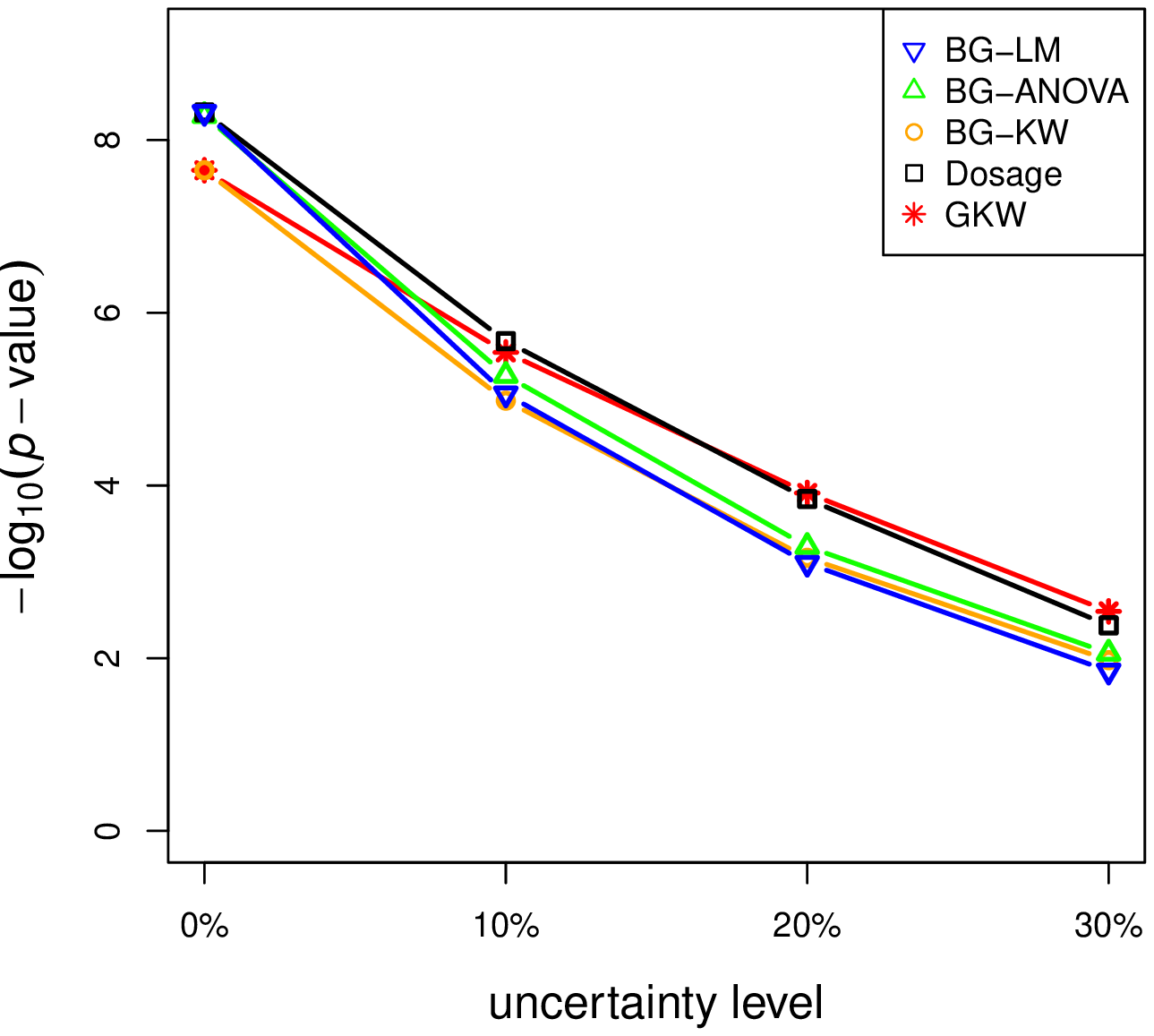} 
}

\caption{The significance,  on the $-\log_{10}(p\text{-values})$ scale, of the association tests at different genotype uncertainty levels for (a) DBP with rs7842868, (b) HbA1c with rs1358030.}
\end{figure}

\subsection{HbA1c with  rs1358030}
The SNP rs1358030 on chromosome 10 was reported to be genome-wide significantly associated with HbA1c ($p\text{-value}=5 \times 10^{-9}$) in the conventionally treated group \citep{Paterson:2010}. 
We performed association analyses using the observed genotypes and masked probabilistic genotypes in a fashion similar to the above DBP application.

The histogram of the phenotype, average $\log(\text{HbA1c})$ measurements over the first six study periods, is given in Figure~\ref{Hist2}. for the $n=664$ conventionally treated patients. 
The genotype group sizes at rs1358030 were $(n_0,n_1,n_2)= (267,307,90)$. 
In this application, when the actual genotypes were used, the dosage test (equivalent to BG-LM) showed the best performance in detecting the HbA1c association (Figure ~\ref{hba1c} at 0\% uncertainty level).
However, the advantage of the dosage test  dissipated as the genotype uncertainty level increased.

\section{Conclusions and Disucssions}
\label{s:sec5}

In this paper, we generalized the rank-based nonparametric  Kruskal--Wallis test to allow for group uncertainty when comparing $k$ samples. 
The proposed generalized test statistic follows an asymptotic chi-square distribution with $k -1$ degrees of freedom, suitable for statistical inference of large-scale data (e.g. genome-wide association or next-generation sequencing data) without the need for permutation or other computationally inefficient procedures. 

Although the work was originally motivated by the analyses of SNPs with genotype uncertainty in genetic association studies, it can be readily applied to other scientific studies. Extensive simulations and several applications showed that the generalized Kruskal--Wallis test provides a good balance between robustness and power. Our proof-of-principle stimulation studies could be improved to more closely mimic real genetic data and models. However, the validity and robustness conclusion characteristically holds, given the combined evidence from our theoretical work, and simulation and application studies.

The proposed generalized  Kruskal--Wallis test has its limitations. For example, the exact distribution for small samples (e.g. relevant to genetic association studies of rare variants) is unknown. The current test does not include other covariates. However, this limitation could be partially alleviated by considering the residuals from a regression model that accounts for the effects of other covariates first, assuming there is no SNP-covariate interaction. 
Both the original and the generalized Kruskal--Wallis tests are formulated for continuous outcomes, however, the probability-weighting principle exploited here could potentially be extended to analyze case-control data or other categorical outcomes and is the subject of ongoing research.

In conclusion, the proposed generalized Kruskal--Wallis test provides scientists a tool to continue investigate, in a robust non-parametric fashion, the $k$-sample problems in the presence of group uncertainty. The generalized Kruskal--Wallis test includes the original Kruskal--Wallis test as a special case, and its power is comparable to parametric counterparts under conditions favorable to the model-based approaches. 
When there is model misspecification or high group uncertainty, the generalized Kruskal--Wallis test can outperform the others. 
 
\section*{Acknowledgement}
The authors thank Dr. Andrew Paterson and his research group, specifically Daryl Waggott and Ye Chang, for providing their genome-wide association data and Drs. Fang Yao and D.A.S. Fraser for constructive discussions. 
The research was supported by Natural Sciences and Engineering Research Council of Canada (NSERC, 250053-2008) and Canadian Institutes of Health Research (CIHR, MOP 84287).

\bibliography{GKW_biblio2}

\begin{thebibliography}{23}
\newcommand{\enquote}[1]{``#1''}
\expandafter\ifx\csname natexlab\endcsname\relax\def\natexlab#1{#1}\fi

\bibitem[{Aulchenko et~al.(2010)Aulchenko, Struchalin, and van
  Duijn}]{ProbABEL}
Aulchenko, Y.~S., Struchalin, M.~V., and van Duijn, C.~M. (2010),
  \enquote{ProbABEL package for genome-wide association analysis of imputed
  data,} \textit{BMC Bioinformatics}, 11.

\bibitem[{Calvo et~al.(2010)Calvo, Tucker, Compton, Kirby, Crawford, Burtt,
  et~al.}]{SYZYGY}
Calvo, S.~E., Tucker, E.~J., Compton, A.~G., Kirby, D.~M., Crawford, G., Burtt,
  N.~P., et~al. (2010), \enquote{High-throughput, pooled sequencing identifies
  mutations in NUBPL and FOXRED1 in human complex I deficiency.} \textit{Nature
  Genetics}.

\bibitem[{Carvalho et~al.(2010)Carvalho, Louis, and A}]{Carvalho:2010}
Carvalho, B.~S., Louis, T.~A., and A, I.~R. (2010), \enquote{Quantifying
  uncertainty in genotype calls,} \textit{Bioinformatics}, 26, 242--249.

\bibitem[{Fraser(1957)}]{DAS:1957}
Fraser, D. A.~S. (1957), \textit{Nonparametric methods in statistics}, New
  York: Wiley.

\bibitem[{H{\'a}jek et~al.(1999)H{\'a}jek, Sid{\'a}k, and Sen}]{Hajek:1999}
H{\'a}jek, J., Sid{\'a}k, Z., and Sen, P.~K. (1999), \textit{Theory of Rank
  Tests}, Academic Press.

\bibitem[{Iman et~al.(1975)Iman, Quade, and Alexander}]{Iman:1975}
Iman, R.~L., Quade, D., and Alexander, D.~A. (1975), \enquote{Exact probability
  levels for the Kruskal-Wallis test,} in \textit{Selected tables in
  mathematical statistics}, eds. Harter, H.~L. and Owen, D.~B., American
  Mathematical Society, vol.~3, pp. 329--384.

\bibitem[{Korn et~al.(2008)Korn, Kuruvilla, McCarroll, Wysoker, Nemesh, Cawley,
  et~al.}]{Birdseed}
Korn, J.~M., Kuruvilla, F.~G., McCarroll, S.~A., Wysoker, A., Nemesh, J.,
  Cawley, S., et~al. (2008), \enquote{Integrated genotype calling and
  association analysis of SNPs, common copy number polymorphisms and rare
  CNVs,} \textit{Nature Genetics}, 40, 1253--1260.

\bibitem[{Kruskal(1952)}]{K:1952}
Kruskal, W.~H. (1952), \enquote{A nonparametric test for the several sample
  problem,} \textit{Annals of Mathematical Statistics}, 23, 525--540.

\bibitem[{Kruskal and Wallis(1952)}]{KW:1952}
Kruskal, W.~H. and Wallis, W.~A. (1952), \enquote{Use of ranks in one-criterion
  variance analysis,} \textit{Journal of American Statistical Association}, 47,
  583--621.

\bibitem[{Kutalik et~al.(2010)Kutalik, Johnson, Bochud, Mooser, Vollenweider,
  Waeber, et~al.}]{Kutalik:2010}
Kutalik, Z., Johnson, T., Bochud, M., Mooser, V., Vollenweider, P., Waeber, G.,
  et~al. (2010), \enquote{Methods for testing association between uncertain
  genotypes and quantitative traits,} \textit{Biostatistics}.

\bibitem[{Li et~al.(2006)Li, Willer, Ding, Scheet, and Abecasis}]{Mach:2006}
Li, Y., Willer, C.~J., Ding, J., Scheet, P., and Abecasis, G.~R. (2006),
  \enquote{MaCH: using sequence and genotype data to estimate haplotypes and
  unobserved genotypes,} \textit{Genetic Epidemiology}, 34, 816--834.

\bibitem[{Li et~al.(2009)Li, Willer, Sanna, and Abecasis}]{Mach:2009}
Li, Y., Willer, C.~J., Sanna, S., and Abecasis, G.~R. (2009), \enquote{Genotype
  Imputation,} \textit{Annual Review of Genomics and Human Genetics}, 10,
  387--406.

\bibitem[{Lin et~al.(2008)Lin, Hu, and Huang}]{Lin:2008}
Lin, D.~Y., Hu, Y., and Huang, B.~E. (2008), \enquote{Simple and efficient
  analysis of disease association with missing genotype data,} \textit{American
  Journal of Human Genetics}, 83, 535--539.

\bibitem[{Marchini and Howie(2010)}]{Marchini:2010}
Marchini, J. and Howie, B. (2010), \enquote{Genotype imputation for genome-wide
  association studies,} \textit{Nature Review Genetics}, 11, 499--511.

\bibitem[{Marchini et~al.(2007)Marchini, Howie, Myers, McVean, and
  Donnelly}]{Marchini:2007}
Marchini, J., Howie, B., Myers, S., McVean, G., and Donnelly, P. (2007),
  \enquote{A new multipoint method for genome-wide association studies by
  imputation of genotypes,} \textit{Nature Genetics}, 39, 906--913.

\bibitem[{Nicolae(2006)}]{Nicolae:2006}
Nicolae, D.~L. (2006), \enquote{Testing Untyped Alleles (TUNA)---Applications
  to Genome-Wide Association Studies,} \textit{Genetic Epidemiology}, 30,
  718--727.

\bibitem[{Paterson et~al.(2010)Paterson, Waggott, Boright, Hosseini, Shen,
  Sylvestre, et~al.}]{Paterson:2010}
Paterson, A.~D., Waggott, D., Boright, A.~P., Hosseini, S.~M., Shen, E.,
  Sylvestre, M.-P., et~al. (2010), \enquote{A genome-wide association study
  identifies a novel major locus for glycemic control in type 1 diabetes, as
  measured by both A1C and glucose,} \textit{Diabetes}, 59, 539--49.

\bibitem[{Pei et~al.(2008)Pei, Li, Zhang, Papasian, and H-W}]{Pei:2008}
Pei, Y.-F., Li, J., Zhang, L., Papasian, C.~J., and H-W, D. (2008),
  \enquote{Analyses and Comparison of Accuracy of Different Genotype Imputation
  Methods,} \textit{PLoS ONE}, 3, e3551.

\bibitem[{Schaid et~al.(2002)Schaid, Rowland, Tines, Jacobson, and
  Poland}]{Schaid:2002}
Schaid, D.~J., Rowland, C.~M., Tines, D.~E., Jacobson, R.~M., and Poland, G.~A.
  (2002), \enquote{Score Tests for Association between Traits and Haplotypes
  when Linkage Phase Is Ambiguous,} \textit{American Journal of Human
  Genetics}, 70, 425--434.

\bibitem[{Wald and Wolfowitz(1944)}]{WW:1944}
Wald, A. and Wolfowitz, J. (1944), \enquote{Statistical Tests Based on
  Permutations of the Observations,} \textit{Annals of Mathematical
  Statistics}, 15, 358--372.

\bibitem[{Wei et~al.(2011)Wei, Wang, Hu, Lyon, and Hakonarson}]{SNVer}
Wei, Z., Wang, W., Hu, P., Lyon, G.~J., and Hakonarson, H. (2011),
  \enquote{SNVer: a statistical tool for variant calling in analysis of pooled
  or individual next-generation sequencing data.} \textit{Nucleic Acids
  Research}, 39.

\bibitem[{Ye et~al.(2010)Ye, Canty, Waggott, Sylvestre, Shen, Hosseini,
  et~al.}]{Ye:2010}
Ye, C., Canty, A.~J., Waggott, D., Sylvestre, M.-P., Shen, E., Hosseini, M.,
  et~al. (2010), \enquote{A Repeated Measures Genome Wide Association Study of
  Blood Pressure in Type 1 Diabetes.} Abstract {\#} 203 presented at the
  Nineteenth Annual Meeting of the International Genetic Epidemiology Society.
  \emph{Genetic Epidemiology}, 34, 973.

\bibitem[{Zheng et~al.(2011)Zheng, Li, Abecasis, and Scheet}]{YunLi:2011}
Zheng, J., Li, Y., Abecasis, G.~R., and Scheet, P. (2011), \enquote{A
  Comparison of Approaches to Account for Uncertainty in Analysis of Imputed
  Genotypes,} \textit{Genetic Epidemiology}, 35, 102--110.

\end{thebibliography}
\bibliographystyle{asa}

\section*{Supplemental Material}

\setcounter{table}{0}
\renewcommand{\thetable}{S\arabic{table}}%
\setcounter{figure}{0}
\renewcommand{\thefigure}{S\arabic{figure}}

This supplement contains the results of additional simulations under various scenarios, as well as the tables and figures cited in the main text. Table~\ref{simsum} summarizes the study purpose and the settings of our simulation studies where sample size is 1,000. Other sample sizes (e.g. 500 or 2,000) were also studied, but results were categorically similar, therefore not reported here.
\vspace{0.1in}

\begin{table*}[h!] \footnotesize
\centering
\caption{Reader Guide for the simulation results.} \label{simsum}
\begin{tabular}{c  c  c  c  c c  c  c  c c c} 
\\ 
\toprule
&   Study the impact of  &  Power &   Normal      &      Additive       &        $\alpha$     &      MAF \\
\midrule

\multicolumn{1}{c}{ \multirow {2}{*}{}}    &     \multicolumn{1}{c}{\multirow {2}{*}{minor allele frequency}}  
   & Figure~1(a)  & yes  & yes & 0.01   & 0.2 \\ 
&& Figure~1(b)  & yes  & yes & 0.01   & 0.1 \\ [2ex]
\midrule             
\multicolumn{1}{c}{ \multirow{10}{*}{}} &  

 \multirow{2}{*}{ minor allele frequency }  
   & Figure~S4(a)  & yes  & yes & 0.01   & 0.05 \\ 
&& Figure~S4(b)  & yes  & yes & 0.01   & 0.3 \\ [2ex]

& \multirow{4}{*}{ type 1 error rate}  
   & Figure~S5(a)   & yes  & yes & 0.05   & 0.2 \\ 
&& Figure~S5(b)  & yes  & yes & 0.001   & 0.2 \\ 
&& Figure~S6(a)   & yes  & yes & 0.05   & 0.1 \\ 
&& Figure~S6(b)   & yes  & yes & 0.001   & 0.1 \\ [2ex]
 
 &\multirow{4}{*}{ model assumptions}  
   & Figure~S7(a)   & no  & yes & 0.01   & 0.2 \\ 
&& Figure~S7(b)  & no  & yes & 0.01   & 0.1 \\ 
&& Figure~S8(a)  & yes  & no & 0.01   & 0.2 \\ 
&& Figure~S8(b)  & yes  & no & 0.01   & 0.1 \\ [2ex]

\bottomrule
\end{tabular}
\end{table*}

\newpage

\clearpage

\begin{figure}[h!]
\centering
 \includegraphics[width=0.8\textwidth]{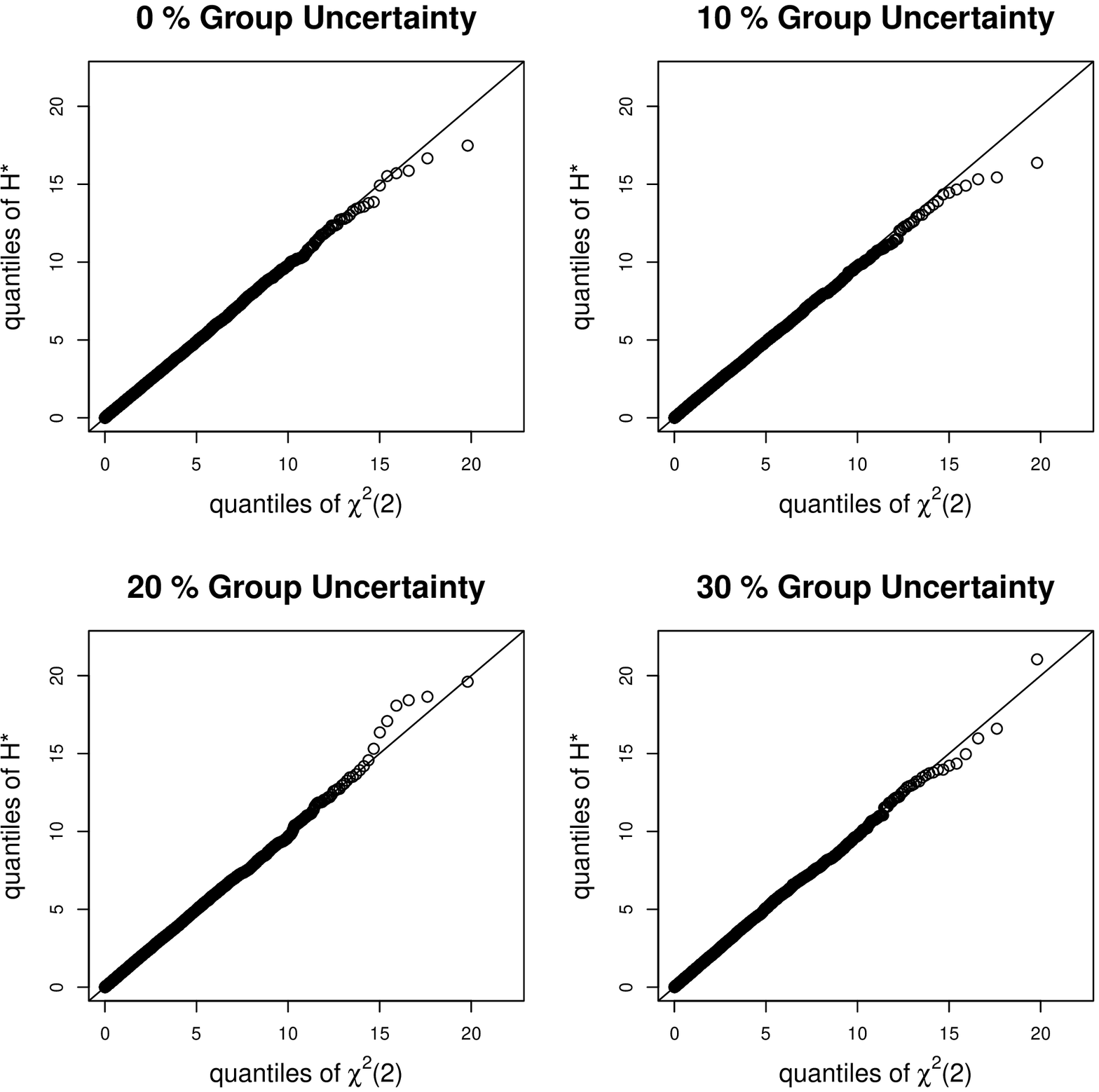}
\caption{The quantile-quantile plots for the GKW test statistic, under a normal null model, for testing the association of a SNP that has minor allele frequency of $20\%$. The Kolmogorov--Smirnov
test has p-values of  $0.279$, $0.595$, $0.628$ and $0.599$, for the lowest to the highest uncertainty levels.  \label{QQplots020} }
\end{figure}

\clearpage

\begin{figure}[h!]
\centering
 \includegraphics[width=0.8\textwidth]{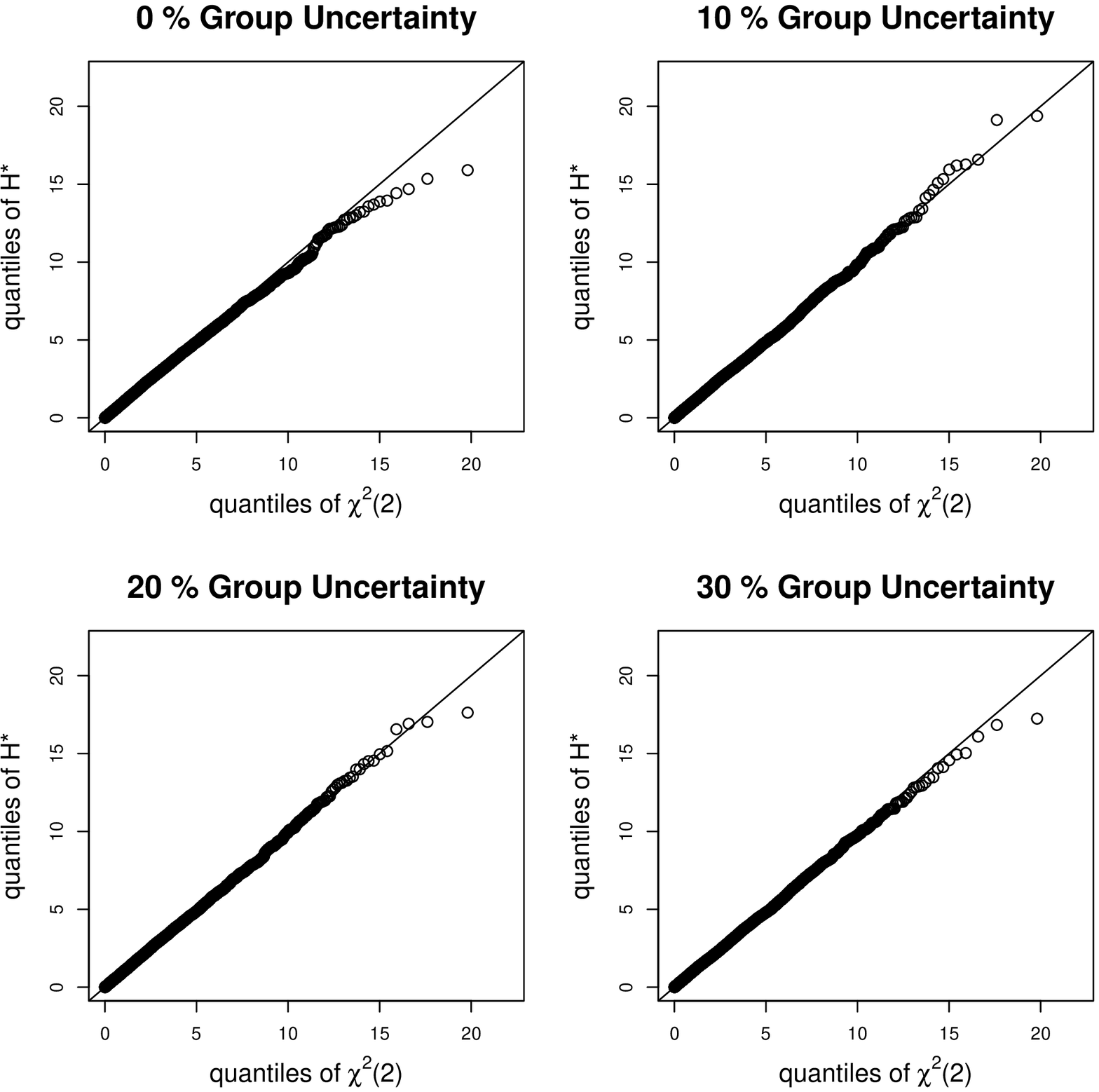}
\caption{The quantile-quantile plots for the GKW test statistic, under a normal null model, for testing the association of a SNP that has minor allele frequency of $10\%$. The Kolmogorov--Smirnov test has p-values of  $0.174$, $0.377$, $0.375$ and $0.194$, for the lowest to the highest uncertainty levels.  \label{QQplots010} }
\end{figure}

\clearpage

\begin{figure}[h!]
\centering
\subfigure[]
{
    \label{Hist1}
   \includegraphics[width=0.5\textwidth]{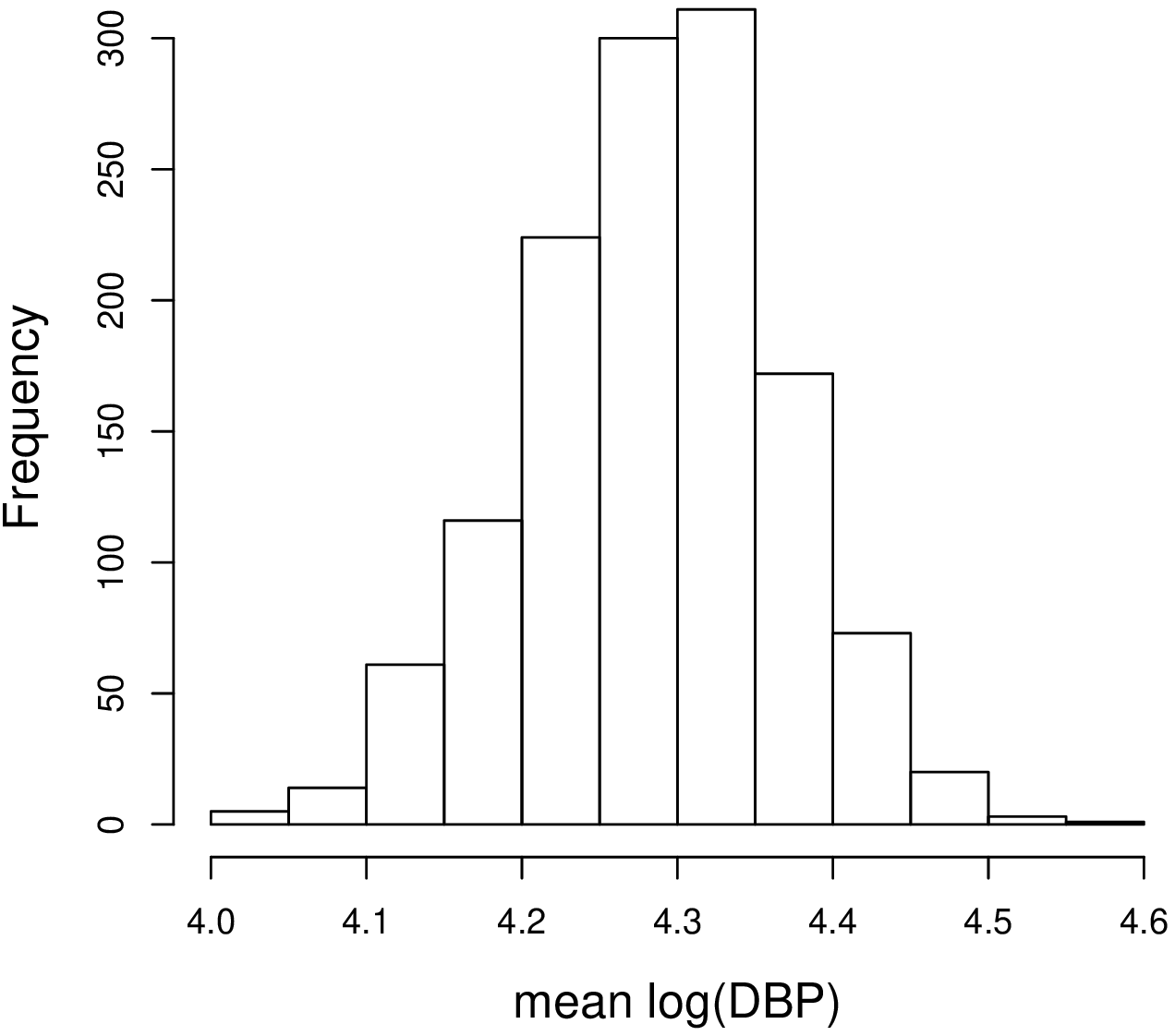}

}
\hspace{0.5cm}
\subfigure[]
{
    \label{Hist2}
   \includegraphics[width=0.5\textwidth]{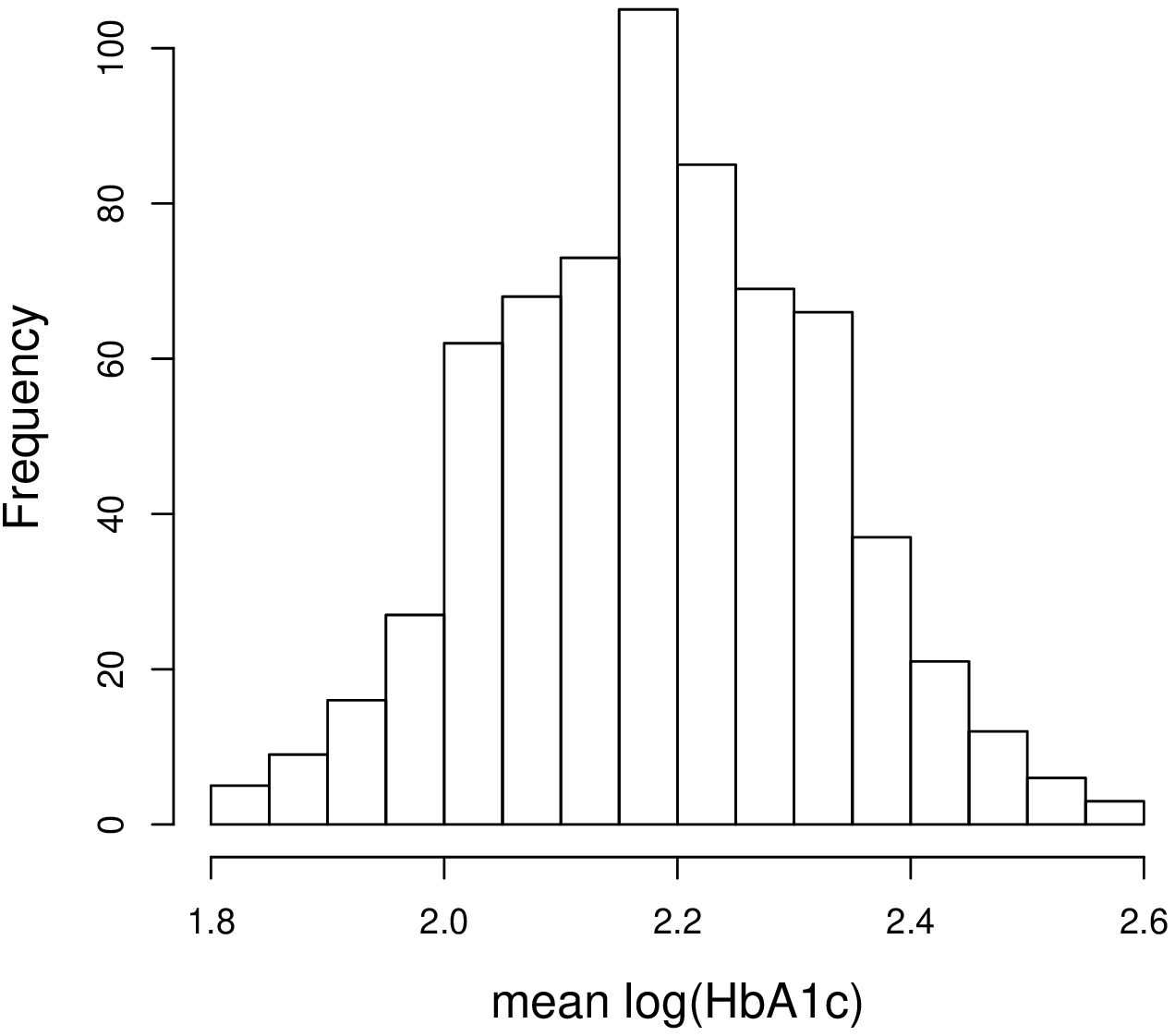} 
}

\caption{Histograms of (a) mean of the natural logarithm of DBP measurements of $1,300$ patients over the first six study periods, (b) mean of the natural logarithm of HbA1c measurements of $664$ conventionally treated patients over the first six study periods. }
\end{figure}

\clearpage

\begin{figure}[h!]
\centering
\subfigure[]
{  \label{WebFigure4a}
   \includegraphics[width=0.5\textwidth]{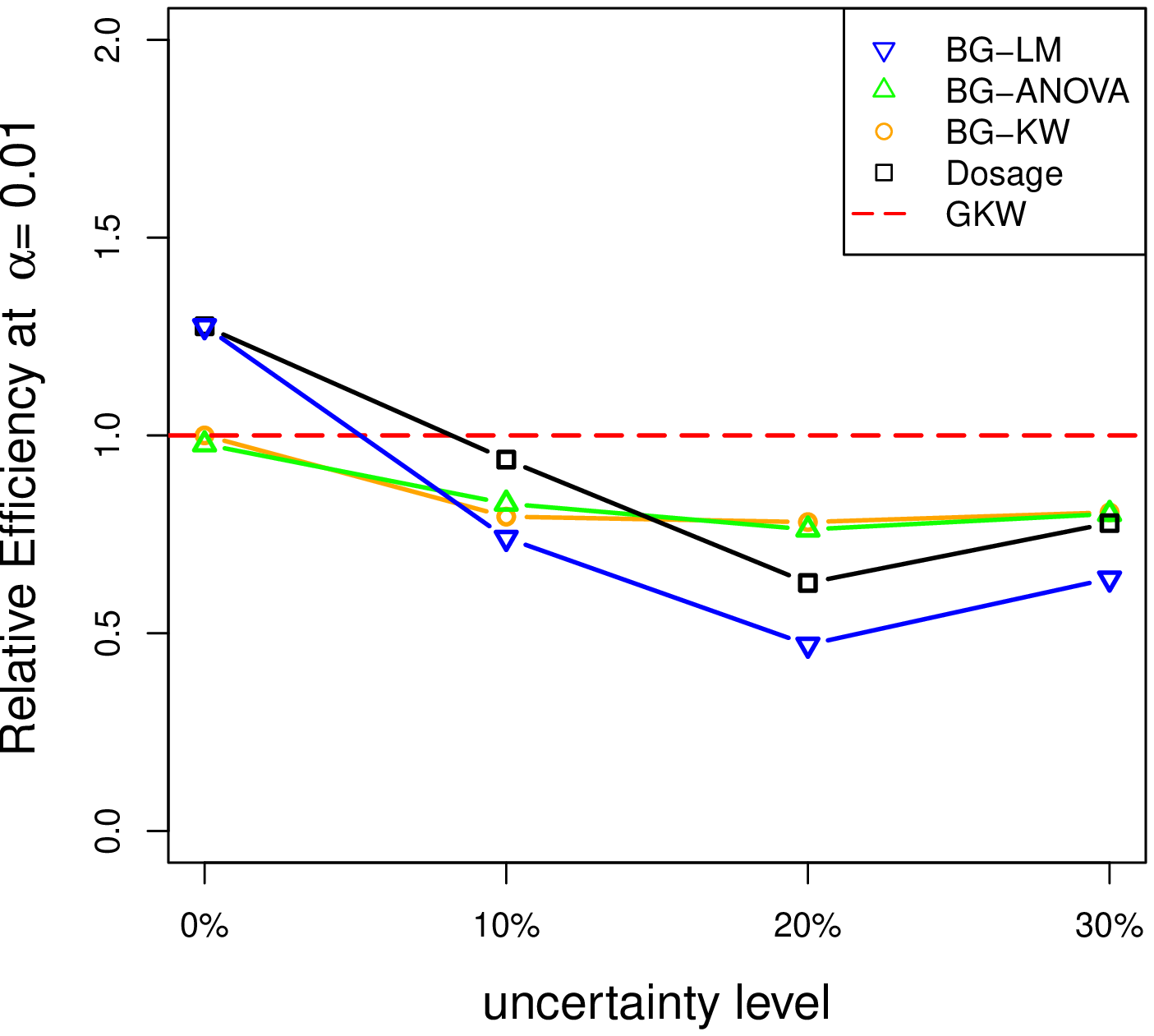}
}
\hspace{0.5cm}
\subfigure[]
{  \label{WebFigure4b}
   \includegraphics[width=0.5\textwidth]{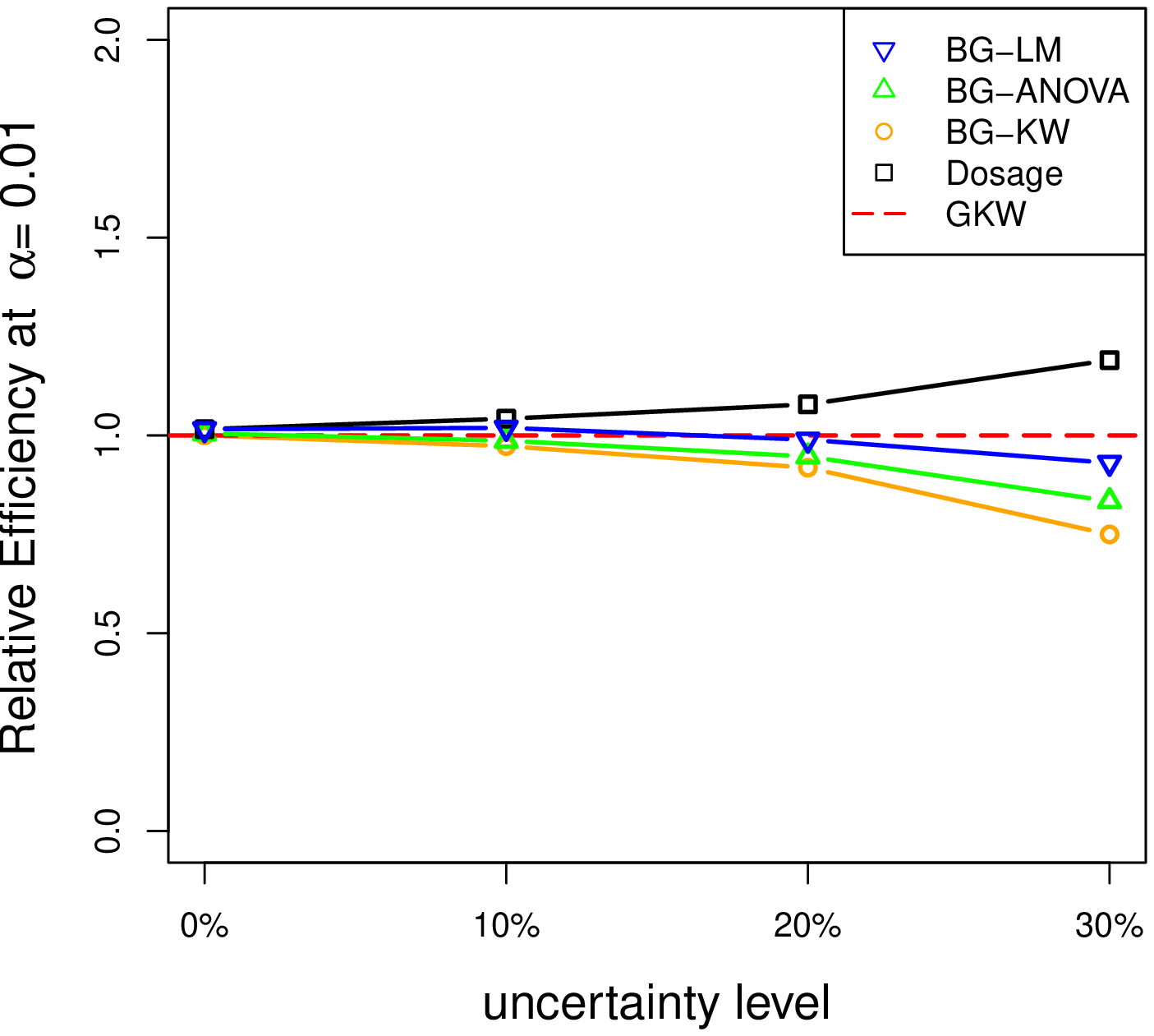} 
}
\caption{Relative efficiency of other tests as compared to the GKW test at $\alpha=0.01$, under a normal additive model, for testing the association of SNP that has (a) minor allele frequency of $5\%$, (b) minor allele frequency of $30\%$. \label{WebFigure4}}
\end{figure}

\clearpage

\begin{figure}[h!]
\centering
\subfigure[]
{ \label{WebFigure5a}
   \includegraphics[width=0.5\textwidth]{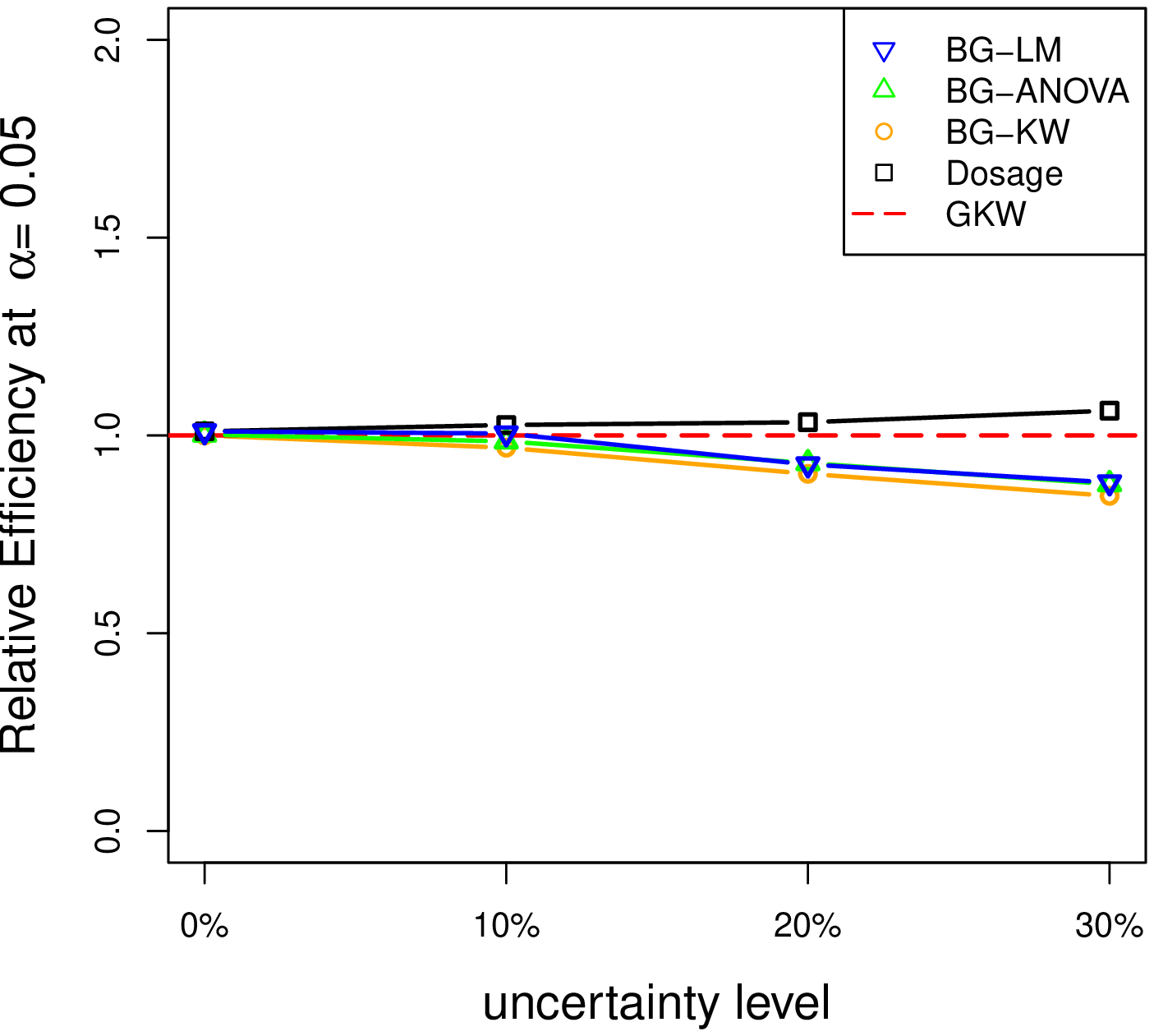}
 }
\hspace{0.5cm}
\subfigure[]
{\label{WebFigure5b}
  \includegraphics[width=0.5\textwidth]{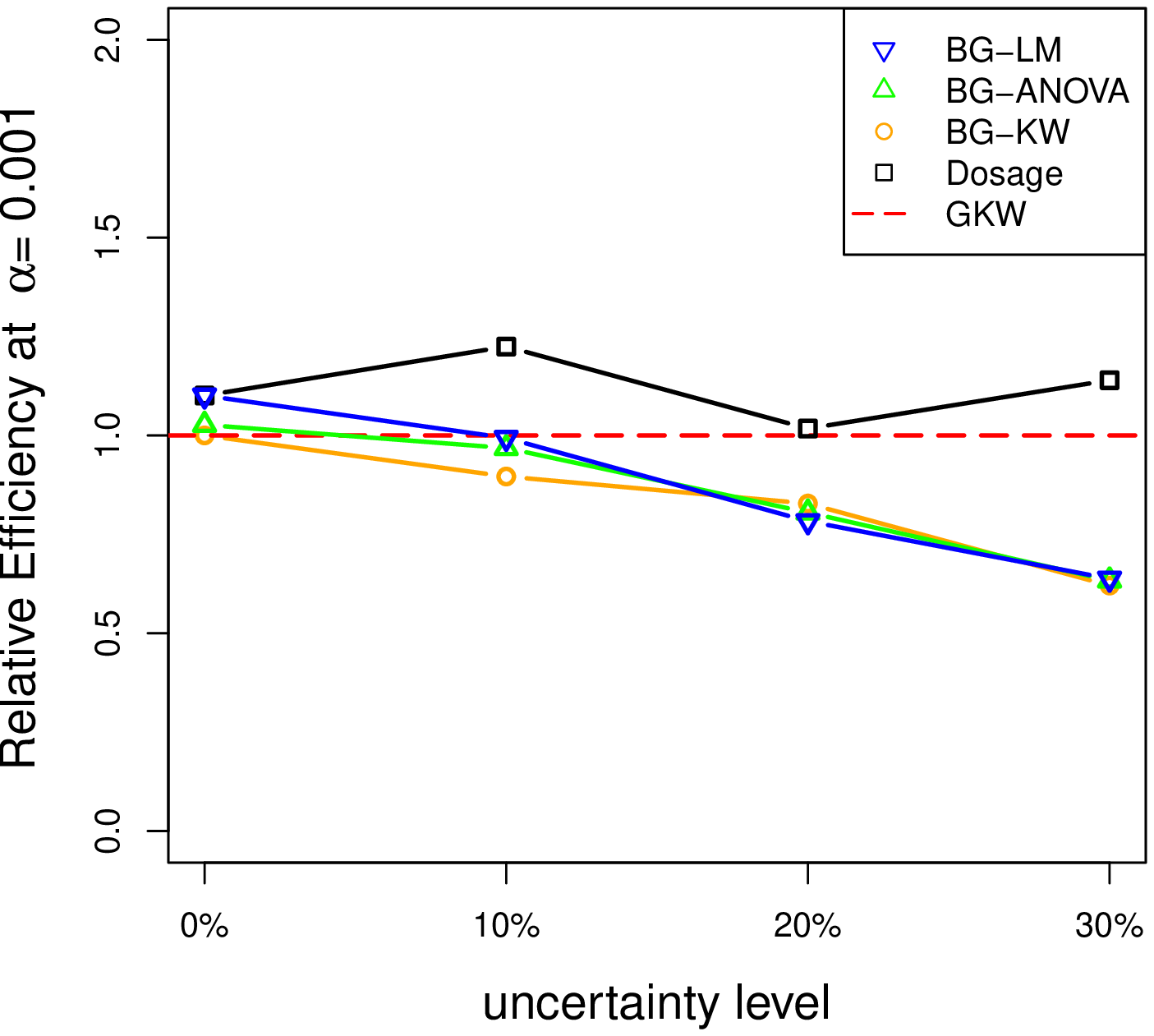} 
}
\caption{Relative efficiency of other tests as compared to the GKW test at (a) $\alpha=0.05$, (b) $\alpha=0.001$, under a normal additive model, for testing the association of SNP that has minor allele frequency of $20\%$.  \label{WebFigure5}}
\end{figure}

\clearpage

\begin{figure}[h!]
\centering
\subfigure[]
{\label{WebFigure6a}
   \includegraphics[width=0.5\textwidth]{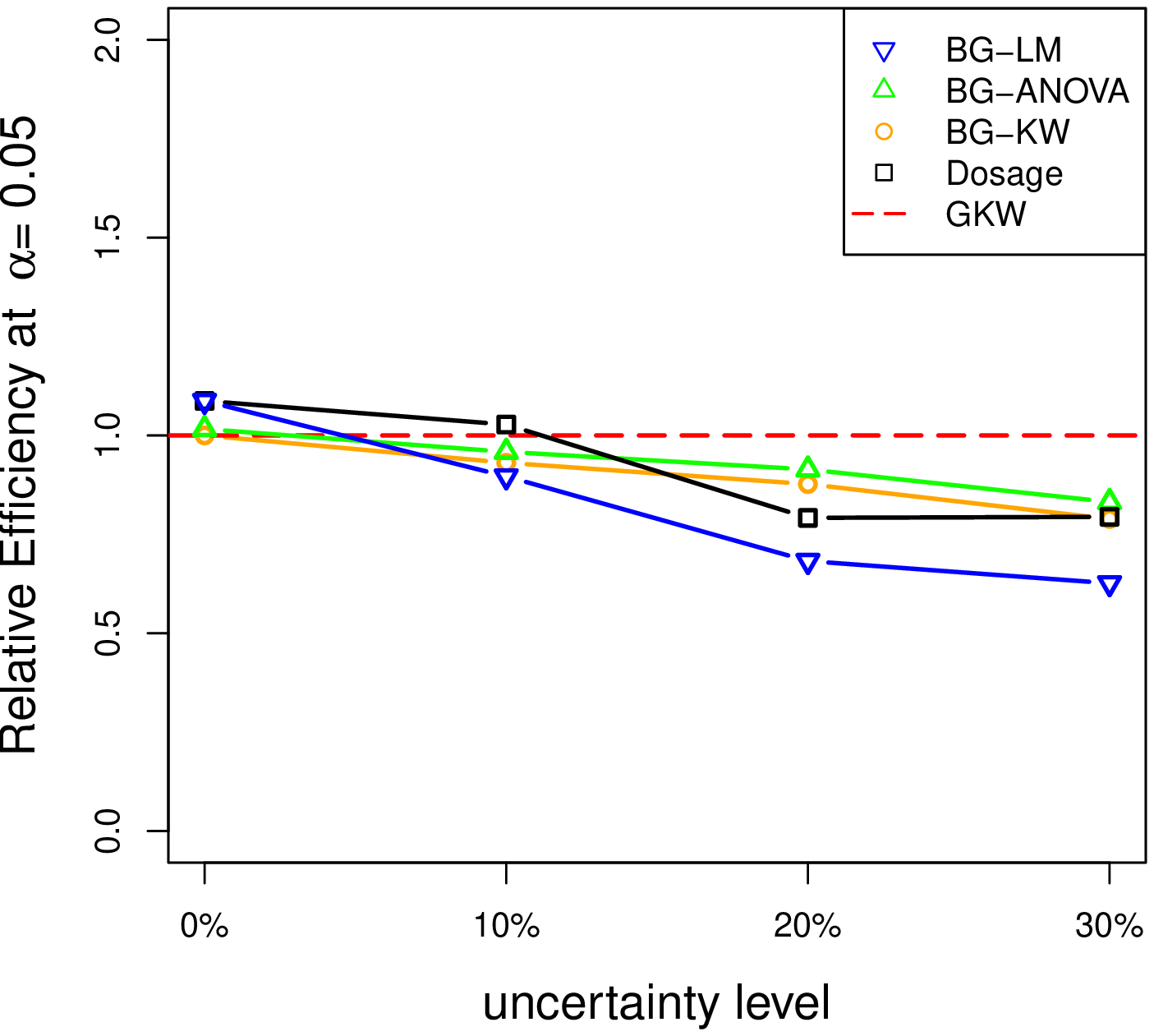}
 }
\hspace{0.5cm}
\subfigure[]
{\label{WebFigure6b}
  \includegraphics[width=0.5\textwidth]{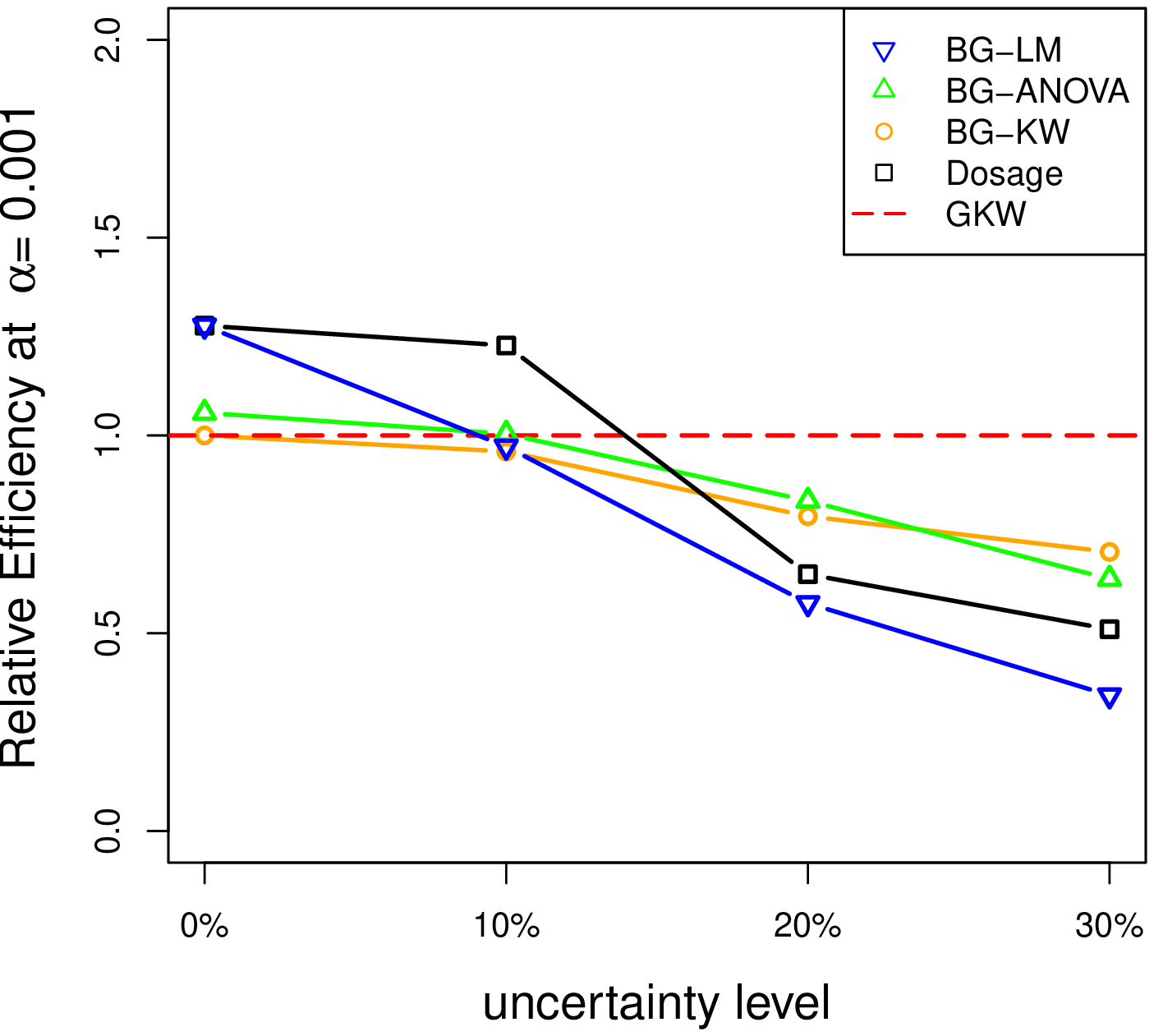} 
}
\caption{Relative efficiency of other tests as compared to the GKW test at (a) $\alpha=0.05$, (b) $\alpha=0.001$, under a normal additive model, for testing the association of SNP that has minor allele frequency of $10\%$. \label{WebFigure6}}
\end{figure}

\clearpage

\begin{figure}[h!]
\centering
\subfigure[]
{\label{WebFigure7a}
\includegraphics[width=0.5\textwidth]{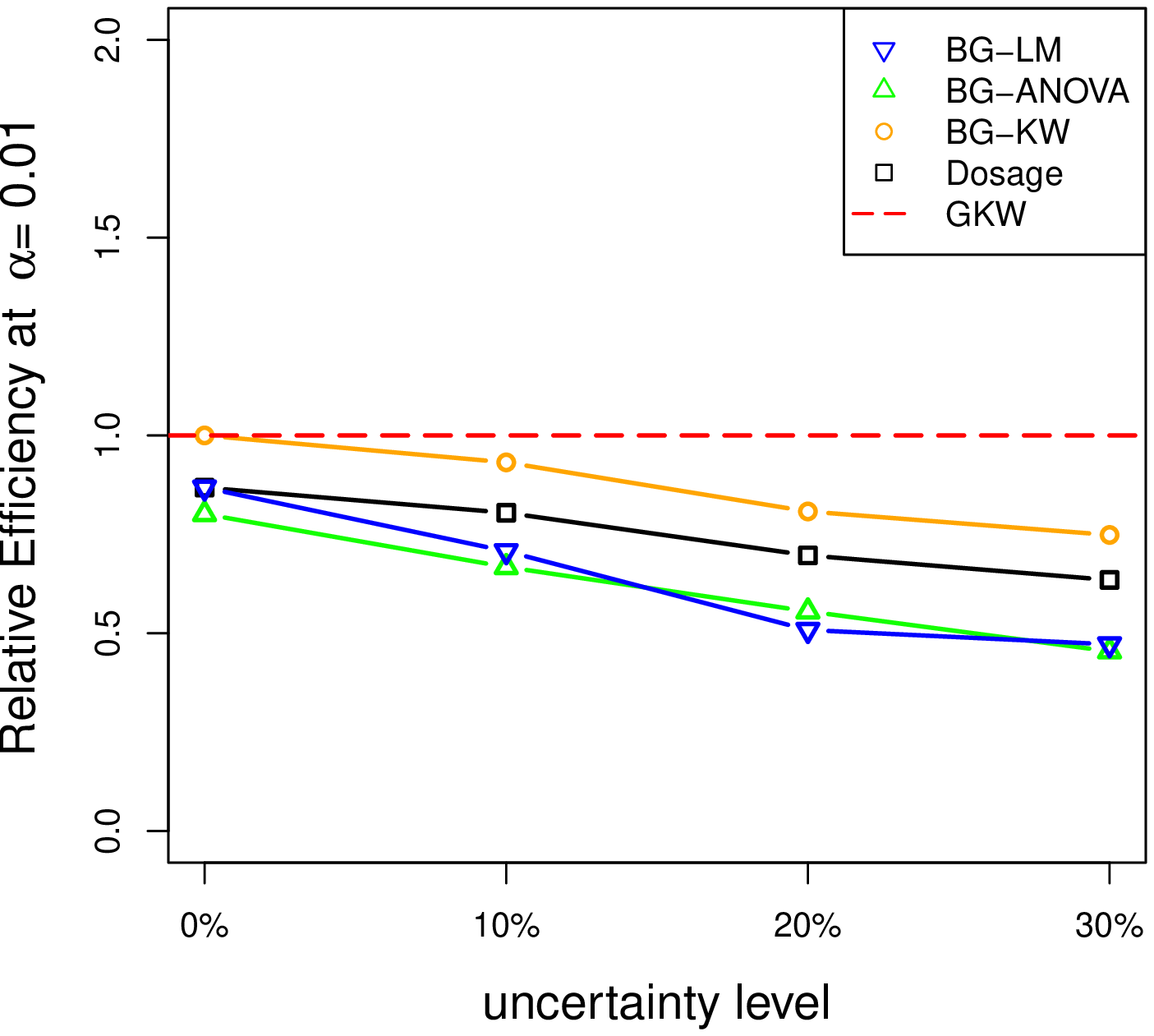}
 }
\hspace{0.5cm}
\subfigure[]
{\label{WebFigure7b}
\includegraphics[width=0.5\textwidth]{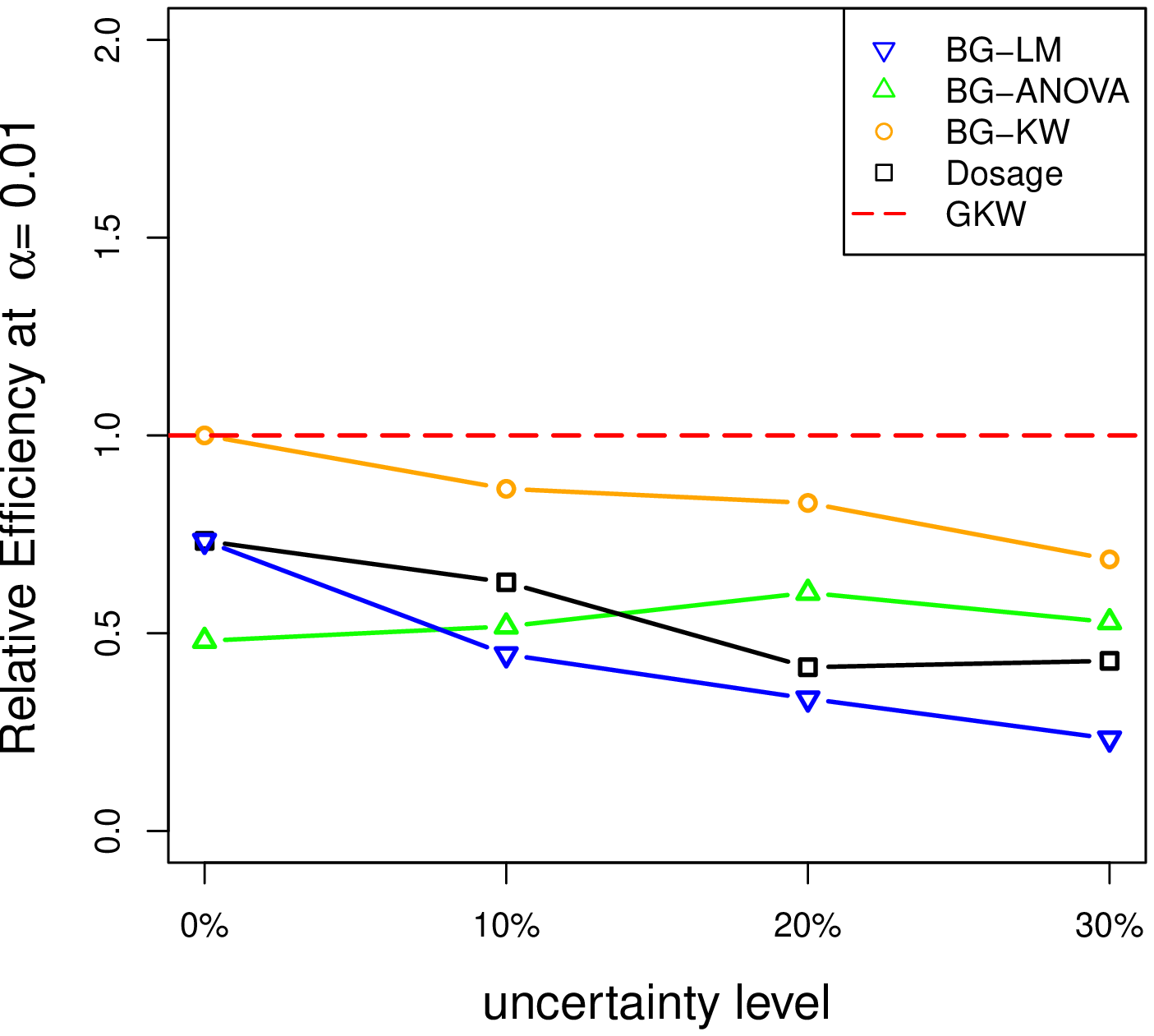}
}
\caption{Relative efficiency of other tests as compared to the GKW test at $\alpha=0.01$ under a non-normal additive model, for testing the association of SNP that has (a) minor allele frequency of $20\%$. (b) minor allele frequency of $10\%$. The non-normal data were obtained by taking the exponent of the normal data generated as in Section 3. \label{WebFigure7}} 
\end{figure}

\begin{figure}[h!]
\centering
\subfigure[]
{\label{WebFigure8a}
\includegraphics[width=0.5\textwidth]{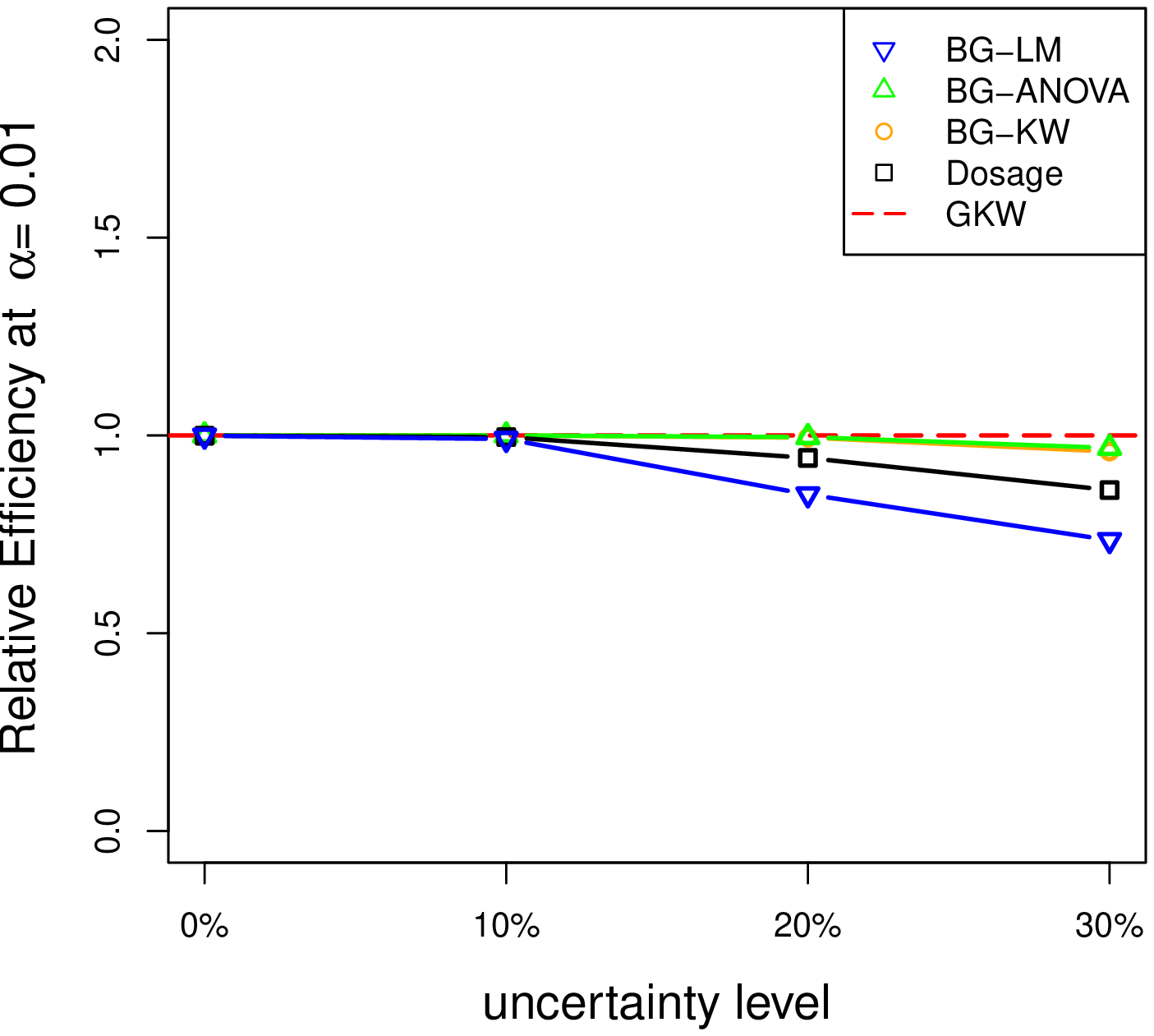}
 }
\subfigure[]
{\label{WebFigure8b}
\includegraphics[width=0.5\textwidth]{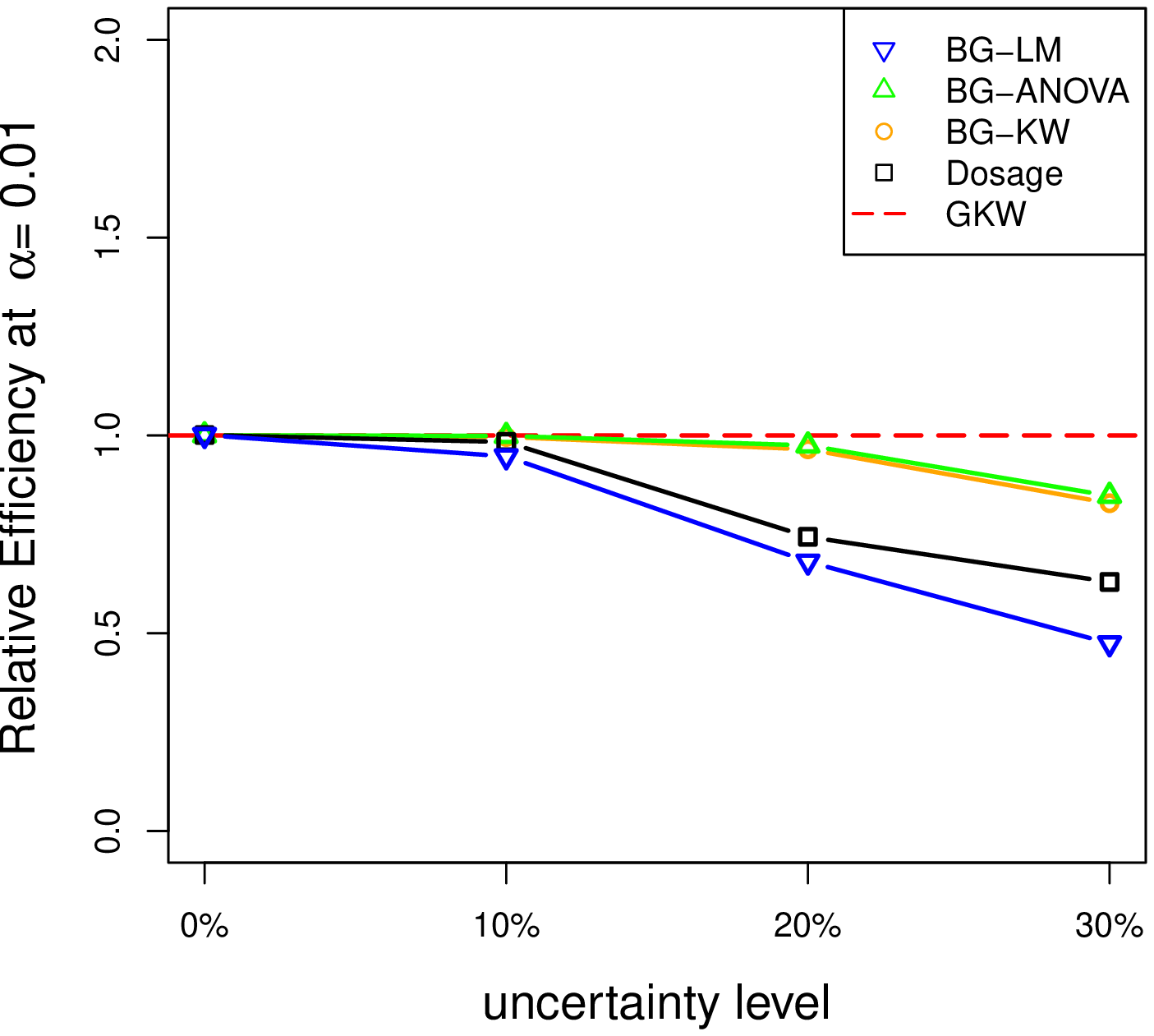}
}
\caption{Relative efficiency of other tests as compared to the GKW test at $\alpha=0.01$ under a normal non-additive model, for testing the association of SNP that has (a) minor allele frequency of $20\%$. (b) minor allele frequency of $10\%$. The data under non-additive model were generated from normal distribution with means $(1.75, 2.25, 2)$ for the three genotype groups $G= 0$, $1$ and $2$, respectively, with a common variance, $\sigma^2=1$. \label{WebFigure8}} 
\end{figure}

\end{document}